\newcommand{\R}{\mathbb{R}}
\newcommand{\N}{\mathbb{N}}
\newcommand{\mP}{\mathbb{P}}
\newcommand{\mE}{\mathbb{E}}
\newcommand{\md}{\,{\rm d}}
\newcommand{\one}{1\mkern-5mu{\hbox{\rm I}}}
\theoremstyle{break}
\newtheorem{Def}{Definition}[section]
\newtheorem{Bem}[Def]{Remark}
\newtheorem{Lem}[Def]{Lemma}
\newtheorem{Prop}[Def]{Proposition}
\newtheorem{Bsp}[Def]{Example}
\newenvironment{proof}{\noindent{\textit{Proof:}}}{%
\unskip\nobreak\hfil\penalty50\hskip1em\null\nobreak
$\Box$
\parfillskip=\z@\finalhyphendemerits=0\endgraf\bigskip}
\let\oldendBsp\endBsp
\def\endBsp{\unskip\nobreak\hfil\penalty50\hskip1em\null\nobreak\hfil%
$\blacksquare$\parfillskip=\z@\finalhyphendemerits=0\endgraf\oldendBsp}
\let\oldendBem\endBem
\def\endBem{\unskip\nobreak\hfil\penalty50\hskip1em\null\nobreak\hfil%
$\blacksquare$\parfillskip=\z@\finalhyphendemerits=0\endgraf\oldendBem}
\date{}
\title{Transforming public pensions: A mixed scheme with a credit granted by the state}
\author{M. Carmen Boado-Penas\footnote{University of Liverpool}\quad\quad Julia Eisenberg\footnote{TU Wien, jeisenbe@fam.tuwien.ac.at, \;\Letter}\quad\quad Ralf Korn\footnote{University of Kaiserslautern and Fraunhofer ITWM}}
\begin{document} 
\maketitle

\begin{abstract}\noindent
Birth rates have dramatically decreased and, with continuous improvements in life expectancy, pension expenditure is on an irreversibly increasing path. This will raise serious concerns for the sustainability of the public pension systems usually financed on a pay-as-you-go (PAYG) basis where current contributions cover current pension expenditure. With this in mind, the aim of this paper is to propose a mixed pension system that consists of a combination of a classical PAYG scheme and an increase of the contribution rate invested in a funding scheme. The investment of the funding part is designed so that the PAYG pension system is financially sustainable at a particular level of probability and at the same time provide some gains to individuals. In this sense, we make the individuals be an active part to face the demographic risks inherent in the PAYG and  re-establish its financial sustainability.
	\vspace{6pt}
	\noindent
	\\{\bf Keywords:} Pensions, Financial sustainability, Adequacy, Risk, Geometric Brownian motion, Stochastic control \\
	\settowidth\labelwidth{{\it 2010 Mathematical Subject Classification: }}%
	\par\noindent {\it 2010 Mathematical Subject Classification: }%
	\rlap{Primary}\phantom{Secondary}
	93E20\newline\null\hskip\labelwidth
	Secondary 91B30, 60K10
\\\textit{JEL classification:} C61, G22, G52, H55, J26 
\end{abstract}

\section{Introduction}
The decline in fertility rates, the increase in longevity and the current forecasts for the ageing of the baby-boom generation all point to a substantial increase in the age dependency ratio, and this will raise serious concerns for the sustainability of PAYG pension systems. In particular, the life expectancy at birth is expected to increase by 5.3 for males and 5.1 years for females when comparing 2016 with 2070 (European Commission \cite{Ageingreport}).
This is a worldwide problem, and consequently, many European countries (European Commission \cite{GreenPaper,WhitePaper}) have already carried out some parametric reforms, or even structural reforms, of their pension systems.
\\In Europe, the common trend of the pension crisis is a wave of parametric adjustments including countries, among others, France, Greece, Hungary, Romania and Spain, see Whitehouse \cite{Whitehousea,Whitehouseb} and OECD \cite{oecd2011,oecd2012,oecd2013,oecd2017}. Among the major changes in pension reform is the introduction of what is known as Notional Defined Contribution pension schemes (NDCs), first developed in the 1990s in countries such as Italy, Latvia, Poland and Sweden. NDCs aim at reproducing the logic of a funded defined contribution plan - (accumulated contributions are converted into a life pension annuity taking into account life expectancy according to standard actuarial practice) albeit, under a PAYG framework.
\\In Latin America, since the 1980's, most of the countries in the region made structural reforms replacing completely or partially their PAYG system with programmes containing a fully funded component of individually capitalised accounts (Rofman et al. \cite{Rofman}). As a result, a transfer of financial market and volatility risk from the state to the individual happened.
\\The PAYG rate of return can be lower than the rate of return of funding schemes, especially in countries where the working population is not growing. In this case, the individual might consider that there is an implicit cost equivalent to the difference in return; see Robalino and Bodor \cite{Robalino} and Vald\'{e}s-Prieto \cite{Valdes} for taxes implicit to PAYG schemes. However, the high-variability of the funding rate of return makes the choice between PAYG and funding less obvious and there might be advantages of mixing PAYG and funded schemes (De Menil et al. \cite{Menil}, Persson \cite{Persson}). Also, PAYG is a useful social security financing technique which ensures income redistribution at both the inter and intra-generational levels.
\\On the other hand, Fajnzylber and Robalino \cite{Fajnzylber} state that the transition from a PAYG to a fully funded scheme has a high transition cost where current contributors pay twice: first to finance their own retirement capital and second to finance pension benefits of current retirees.
Currently, countries, such as Australia, Canada, Norway, Sweden, Latvia and Poland, amongst others, combine funded and PAYG elements within the mandatory pension system.  
These mixed systems have been advocated, particularly by the World Bank, as a practical way to reconcile the higher financial market returns compared with GDP growth with the costs of a scheme with a greater funded element.\\
The academic literature has already focused on mixing pension systems in the past years. Merton \cite{Merton} studies this problem under a general equilibrium model and concludes that investors cannot achieve an optimal portfolio allocation of their savings. Matsen and Thogersen \cite{Matsen}, Knell \cite{Knell} and Guigou et al. \cite{Guigou} calculate the optimal split between funded and unfunded pension savings. Dutta et al. \cite{Dutta} show that in a mean-variance framework the mix of funded and unfunded is desirable because it enables risk diversification. Devolder and Melis \cite{Devolder} analyse the portfolio allocation problem between various financial assets and PAYG and obtain constant portfolio allocations. Alonso-Garc\'ia and Devolder \cite{Alonso} study the cohort's optimal mix between funded and unfunded schemes and whether there are diversification benefits for the specific context of a notional defined contribution scheme with a constant contribution rate.\medskip
\\
On the other hand, Robalino and Bodor \cite{Robalino} propose the use of government indexed bonds - first introduced by Buchanan \cite{Buchanan} to support the sustainability of PAYG schemes. In particular, Robalino and Bodor \cite{Robalino} analyse the case of a cash-surplus of the pension fund being invested every six or twelve months in GDP indexed government bonds. However this approach does not guarantee the financial sustainability of the system, and the application is done to notional account systems. Other papers, Auerbach and Kotlikoff \cite{Auerbach}, Sinn \cite{Sinn}, Palacios and Sinn \cite{Palacios}, also provide some discussion about the use of government bonds.
\\With this in mind, our research extends the literature on mixed pension systems and we analyse two types of strategies under which 
an investment into a defined contribution (DC) funded scheme can restore the financial sustainability of the system and at the same time provide, on expectation some gains to the individual. In this sense, we also make the individuals be an active part to re-establish the risks (mainly demographic and longevity) inherent in the PAYG.\medskip
\\
Following this introduction, the next section describes the model together with the assumptions used. In our proposed modelling framework we take on the view of a prototypical customer (PC) that earns the average salary and has the average age. This can be regarded as a standardization approach comparable to the representative agent in finance. Further, our model is based on a credit granted by the state to the PC in the sense that the state covers the deficit as soon as it happens and the PC will be paying to the state in the future. In Section 3, some variants are presented where the PC invests the corresponding money into a fund and needs to repay the credit either at the end of the year or after a longer specified period. A comparison between the annual approach and the long-term approach is given along. In Section 4, we assume that the individuals need to transfer any excess of return above some particular level that needs to be optimised. We compare the results with the corresponding discrete time approach of Section 3. Section 5 concludes and makes suggestions for further research.
\section{The Model}

This section describes the model and some of the assumptions made in the paper so that the classical PAYG can be transformed into a mixed pension system socially accepted by the individuals with contributions paid into a PAYG and a funding scheme.

\noindent
We consider a prototypical contributor (PC), i.e.\ \textit{the average contributor}, with an average salary and average salary increases. This PC has to contribute an amount $C_0$ -- expressed in percentage of his salary -- at time $t=0$ into the PAYG.\\
With the aim of restoring the financial sustainability of the PAYG scheme an automatic balancing mechanism will be triggered if the income from contributions is not sufficient to pay for the pension expenditure. This mechanism is based on increases of the contribution rate paid by the PC.\\
Let assume that the state anticipates the deficit and it is known that the contributions that make the PAYG system sustainable over time are given by
\begin{equation*}
C = (C_1,...,C_T)
\end{equation*}
and
\[
C_j > C_0,\; j=1,...,T\ .
\]
Considering that the state targets to transform the classical PAYG scheme and is taking over the payment of the differences $C_j-C_0$ for $j=1,...,T$. However, these payments represent some kind of a credit because, as soon as the deficit occurs the PC has to invest some pre-specified amount of money in a fund, in addition to the regular contribution of $C_0$ to PAYG. 
A certain part of the return on investment will be used to cover the debt amount of $C_j-C_0$ in the (near) future. The remaining capital belongs to the PC and could be used for instance for the retirement phase. The government will bear the risk of not full debt repayment if the return on investment is not sufficient to cover the deficit for one particular year.

\noindent
The following sections specify two types of investment strategies depending on the amount invested and the time horizon for the debt repayment. We calculate the probability of full payback of the debt, the expected loss of the state and the expected gains of the PC.
\\In the following we act on some filtered space $(\Omega,\mathfrak A,(\mathcal F_t)_{t\geq 0},\mP)$ where $\mathcal F$ is assumed to be the right-continuous filtration generated by a given geometric Brownian motion $F$.
\section{Lump Sum Repayment\label{sec:ls}}
In this section, we consider an annual fund investment approach where the increase of the contribution paid by the PC is invested for a year. Depending on the the actual way how the risk of a shortfall is shared between PC and the state, we suggest two different models.

\subsection{Payback first}
We assume that \textbf{at time $t=0$} it is agreed that the contribution to PAYG of the PC is fixed to be $C_0$ for the next $T+1$ years. In return, the PC invests an amount of $\alpha (C_j-C_0)$ into a fund with value dynamics given by
\[
F_t = F_0 e^{\mu t + \sigma W_t} \ 
\]
at the beginning of year $j+1$ (i.e. at time $j$) for some $\alpha>0$. Then, at time $j+1$, the PC pays $C_j-C_0$ to the state if 
\[
\alpha (C_j-C_0) e^{\mu + \sigma (W_{j+1}-W_{j})} \ge C_j -C_0 \ .
\]
If this is the case, then the remaining value of the fund position stays with the PC. In the other case, the state receives the full fund position and takes over the loss of
\[
L_j:=(C_j-C_0)\cdot \left(1-\alpha e^{\mu + \sigma (W_{j+1}-W_{j})}\right)\ .
\]
This procedure is repeated every year.\footnote{If the PC invests at time $j$ - in line with the necessary increase of the PAYG contribution - and pays back at time $j+1$, then the procedure also includes a credit for a duration of one year.}

\noindent Certainly, the state wants to keep the probability of losses small, while the whole procedure is only attractive for the PC if there is a possible gain at least on the level of expectations. It is thus clear that the fund characteristics play an important role. We, therefore, summarize some properties in the following proposition. 
\begin{Prop}\label{prop:1}
For a value of $\alpha>0$ and $C_j>C_0 >0$, we obtain:\\[1mm]
\noindent
a) The probability that the full payment is made to the state at time $j$ is given by
\begin{align*}
P_j:=\mP\big[\text{Full payment at time }j+1\big] &= \mP\big[\alpha e^{\mu + \sigma (W_{j+1}-W_{j}}) \ge 1\big] \\
 &= \Phi\left(\frac{\mu+\ln(\alpha)}{\sigma}\right) \;.
\end{align*}
b) The expected loss $\mE[L_j]$ of the state at time $j$ is given by
\begin{align*}
\mE[L_j] &= \mE\left[(C_j-C_0)\left(1 - \alpha e^{\mu + \sigma (W_{j+1}-W_{j})}\right)^+\right]
\\&= (C_j-C_0) \bigg\{\Phi\left(-\frac{\mu+\ln(\alpha)}{\sigma}\right) 
 - \alpha e^{\mu+\frac{\sigma^2}2} \Phi\left(-\frac{\mu+\sigma^2+\ln(\alpha)}{\sigma}\right)\bigg\} \ .
 \end{align*}
c) The expected gain $\mE[G_j]$ of the PC at time $j$ is given by
\begin{align*}
\mE[G_j] &= \mE\left[(C_j-C_0)\left( \alpha e^{\mu + \sigma (W_{j+1}-W_{j}})-1\right)^+\right] 
\\&= (C_j-C_0) \bigg\{\alpha e^{\mu+\frac{\sigma^2}2} \Phi\left(\frac{\mu+\sigma^2+\ln(\alpha)}{\sigma}\right)  
  -\Phi\left(\frac{\mu+\ln(\alpha )}{\sigma}\right)\bigg\} \ .
\end{align*}
\end{Prop}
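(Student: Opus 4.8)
The plan is to reduce all three statements to the law of a single standard normal variable together with one Gaussian computation. Since $W$ is a Brownian motion, the annual increment $W_{j+1}-W_j$ is $N(0,1)$-distributed; I set $Z:=W_{j+1}-W_j$ and abbreviate $a:=-\frac{\mu+\ln(\alpha)}{\sigma}$. All three parts then concern only the distribution of $\alpha e^{\mu+\sigma Z}$ and differ merely in which functional of this variable is averaged. For part a) I would take logarithms to rewrite the event of full payment as $\{\alpha e^{\mu+\sigma Z}\ge 1\}=\{Z\ge a\}$, so that $P_j=\mP[Z\ge a]=1-\Phi(a)=\Phi(-a)=\Phi\big(\frac{\mu+\ln(\alpha)}{\sigma}\big)$, where the last steps use the symmetry of the standard normal distribution function.

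The technical core, used in both b) and c), is the truncated exponential moment of a standard normal variable. I would establish, by completing the square in the Gaussian density, the identity $\mE\big[e^{\sigma Z}\mathbf{1}_{\{Z<a\}}\big]=e^{\sigma^2/2}\,\Phi(a-\sigma)$, together with its upper-tail counterpart $\mE\big[e^{\sigma Z}\mathbf{1}_{\{Z\ge a\}}\big]=e^{\sigma^2/2}\,\Phi(\sigma-a)$. This is the only nontrivial calculation in the proposition; everything else is bookkeeping around it.

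For part b) I would write the state's loss as a positive part, $L_j=(C_j-C_0)\big(1-\alpha e^{\mu+\sigma Z}\big)^+$, which lives on the event $\{Z<a\}$, and split its expectation into a probability term and an exponential term carrying the prefactor $\alpha e^{\mu}$. The probability term gives $\Phi(-a)=\Phi\big(-\frac{\mu+\ln(\alpha)}{\sigma}\big)$, while inserting the moment identity turns the exponential term into $\alpha e^{\mu+\sigma^2/2}\,\Phi(a-\sigma)=\alpha e^{\mu+\sigma^2/2}\,\Phi\big(-\frac{\mu+\sigma^2+\ln(\alpha)}{\sigma}\big)$, which is exactly the claimed formula. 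Part c) is completely symmetric: the gain $G_j=(C_j-C_0)\big(\alpha e^{\mu+\sigma Z}-1\big)^+$ lives on the complementary event $\{Z\ge a\}$, and the same two building blocks, now with the upper-tail version of the moment identity, yield $\alpha e^{\mu+\sigma^2/2}\,\Phi\big(\frac{\mu+\sigma^2+\ln(\alpha)}{\sigma}\big)-\Phi\big(\frac{\mu+\ln(\alpha)}{\sigma}\big)$. I expect no genuine obstacle once $Z$ is identified as standard normal and the completing-the-square lemma is in place; the only point requiring care is keeping the direction of the inequality defining $a$ consistent when switching between the loss event $\{Z<a\}$ and the gain event $\{Z\ge a\}$.
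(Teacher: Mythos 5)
Your proposal is correct and follows essentially the same route as the paper: identify $W_{j+1}-W_j$ as standard normal for a), and for b) and c) split the positive part into a probability term plus a truncated exponential moment evaluated by completing the square (the paper carries out exactly this Gaussian integral, citing the Black--Scholes computation, and proves only b), leaving c) as symmetric). The only blemish is a sign slip in your part b): with your convention $a=-\frac{\mu+\ln(\alpha)}{\sigma}$, the probability of the loss event $\{Z<a\}$ is $\Phi(a)$, not $\Phi(-a)$, though the explicit expression $\Phi\left(-\frac{\mu+\ln(\alpha)}{\sigma}\right)$ you state is the correct one.
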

\begin{proof}
Assertion a) is implied by the fact that $W_{j+1}-W_{j}$ is standard normally distributed. \\[1mm]
\noindent
The proof of Assertions b) and c) consists of explicit calculations which are totally similar to those in the proof of the Black-Scholes formula. Thus, we only give the proof of Assertion b): \\[1mm]
\noindent
Noting again the normal distribution property of $W_{j+1}-W_{j}$, we obtain
\begin{align*}
\mE[L_j] &= (C_j-C_0) \bigg(\frac{1}{\sqrt{2\pi}}\int_{-\infty}^{-\frac{\mu+\ln(\alpha)}{\sigma}}{e^{-\frac{y^2}{2}}\md y}  
 -\frac{\alpha}{\sqrt{2\pi}}\int_{-\infty}^{-\frac{\mu+\ln(\alpha)}{\sigma}}{e^{\mu+\sigma y-\frac{y^2}{2}}\md y}\bigg)  \\
&= (C_j-C_0) \left\{\Phi\left(-\frac{\mu+\ln(\alpha)}{\sigma}\right) - \alpha e^{\mu+\frac{\sigma^2}2} \Phi\left(-\frac{\mu+\sigma^2+\ln(\alpha)}{\sigma}\right)\right\}\;.
\end{align*}
\end{proof}
The probability of being able to fully pay $C_j-C_0$ back to the state at time $j+1$ is of course increasing in $\alpha$. The same is true for the expected final fund position of the PC given that $C_j-C_0$ (or the whole position in case it is not sufficient for the full payment) has been already paid to the state. However, a time horizon of just one year is very short. Consider for instance the case of 
\[
\alpha = 1\ ,
\]
i.e. one invests exactly the debt amount $C_j-C_0$ at the beginning of year $j+1$ into the fund. Then, part a) of Proposition \ref{prop:1} implies 
that the probability for obtaining $C_j-C_0$ at time $j$ is given by
\begin{equation*}
   \mP\left[(C_j-C_0) e^{\mu +\sigma (W_{j+1}-W_{j})}> C_j-C_0\right]=\Phi\left(\frac{\mu}{\sigma}\right)\ .
\end{equation*}

\noindent Note that for a fund with optimistic parameters $\mu=0.04, \sigma=0.2$ this probability is approximately $58\%$ which might be considered as not satisfactory enough. 

\noindent
In order to reduce the probability of a shortfall (i.e. not full debt repayment by the PC) significantly, the state should require
\begin{equation*}
\alpha \gg 1\ ,
\end{equation*} 
i.e. the PC should contribute $C_0$ plus a fund investment such that the resulting sum is well-above the actually needed contribution of $C_j$ at time $j$. At the same time, only a strong requirement for a small probability of shortfall will lead to building up a significant amount of money in a fund for the PC, as shown in the following example.

\begin{Bsp}
We illustrate the issues raised above by looking at two settings of values for the fund parameters:\\[5mm]
\noindent
a) \textbf{A standard fund}:  $\mu=0.04,\ \sigma=0.2$\\[1mm]
In this setting, the probabilities for a full payment of $C_1-C_0$ at time $1$ as a function of $\alpha$, denoted by $P_1$, are computed. Further, we also calculate the expected loss for the state, $\mE[L_1]$, and the expected gain of the PC after paying to the state, $\mE[G_1]$. In addition we look at the following two quantities:
\begin{align*}
&\mE[B_1]:= \mE\left[(C_1-C_0)\left( \alpha e^{\mu + \sigma W_1}-1\right)\right]\;,
\\&\mE[G^{A}_1]:=\mE[G_1]- (\alpha-1)(C_1-C_0)\;.
\end{align*}
It means $\mE[B_1]$ is the expected fund position after one year if always the full value $C_1-C_0$ has to be paid back by the PC, and $\mE[G^{A}_1]$ describes expected gain for the PC minus the additional investment of $(\alpha-1)(C_1-C_0)$. All above quantities for $C_0=1,\ C_1=1.1$ can be found in Table \ref{tab:loss_net} below.
\begin{table}[h]
	\centering
		\begin{tabular}{c|ccccccc}
			$\alpha $ & 1 & 1.05 & 1.1 & 1.15 & 
			1.25 & 2 & 3 \\
			\hline
			$P_1$ & 0.58 & 0.67 & 0.75 & 0.82 & 0.91 & $>$ 0.99 & 1.00 \\
			$\mE[L_1]$ & 0.005 & 0.004 & 0.003 & 0.002 & 
			0.001 & $<10^{-4}$ & $<10^{-4}$\\
			$\mE[G_1]$ & 0.0117 & 0.015 & 0.020 & 0.024 & 
			0.034 & 0.112 & 0.219\\
			$\mE[B_1]$& 0.006 & 0.011 & 0.017 & 0.022 & 
			0.033 & 0.112 & 0.219\\
			$\mE[G^{A}_1]$& 0.0117 & 0.0104& 0.0095 & 0.009&
			0.0085 & 0.0124 & 0.0186
		\end{tabular}
		\caption{The key quantities of the scheme as functions of $\alpha$ for $C_0=1$, $C_1=1.1$.}
		\label{tab:loss_net}
\end{table}

\noindent
While the first four rows in Table \ref{tab:loss_net} show an obvious behaviour (i.e. the full payback probability and the gains increase and the expected loss of the state decreases with increasing $\alpha$), it is actually the last row that should be noticeable. The expected difference between the net gains and the additionally invested amount first decrease with increasing $\alpha$ and then increase for large values of $\alpha$. 
In particular, for values of $1\le \alpha \le 1.25$ there is no real incentive for PC to exceed $\alpha=1$. The reason for that is that, initially, $\mE[G_1]$ does not grow fast enough with increases in $\alpha$. There is an initially large benefit for PC as the state takes over some significant investment risk via accepting an expected loss of $5\%$ per unit of money given as a credit. If the state accepts an expected loss of $10\%$ per unit of money (this is the case for $\alpha=0.9$) then PC has already realised a sure gain of $1\%$ plus a small expected net wealth after paying back $C_j-C_0$. In total, this leads to $\mE[G_1]=0.0157$. However, such a (partially) safe gain can only be interpreted as a subsidy granted by the state to motivate fund investment. \\[2mm]
\noindent
The effect of the state taking over the shortfall risk becomes insignificant with increasing values of $\alpha$. Even more, for very large values of $\alpha$ the probability that we end up with a zero fund position after paying back vanishes, and the fund investment can unfold its full potential.\\[2mm] 
\noindent
If the state required a probability for a full credit payback of $0.9$, $0.95$ or $0.99$ then corresponding values of $\alpha$ are given by $1.24$, $1.34$ and $1.53$. Thus, to avoid a loss with a probability of $99\%$, the PC needs to invest $53\%$ more in the fund than when paying the increase in the contribution directly into the PAYG. On the positive side, the PC obtains an expected net fund wealth of $6.25\%$ for additionally paying in $5.3\%$ in our example\\[5mm]
\noindent
b) \textbf{A very diversified fund}:  $\mu=0.04,\ \sigma=0.1$\\[1mm]
In this setting, the probability of a full credit payback, $P_1$, as a function of $\alpha$ are given in Table \ref{tab:loss_prob2}. 
Note that the probability of the full payment and the expected loss are better than in the case of the fund with a higher volatility. However, as shown in the table, the expected net gains for PC are slightly smaller. 
\begin{table}[h]
	\centering
		\begin{tabular}{c|cccccc}
			$\alpha$ & 1 & 1.05 & 1.1 & 1.15 & 1.2 & 1.25 \\
			\hline
			$P_1$ & 0.66 & 0.81 & 0.91 & 0.96 & 0.99& 0.996\\
			$\mE[L_1]$ & 0.002 & 0.001 & $4\cdot 10^{-4}$ & $10^{-4}$ & $<10^{-4}$& $<10^{-4}$ \\
			$\mE[G_1]$ & 0.007 & 0.011 & 0.015 & 0.020 & 0.026& 0.031
		\end{tabular}
		\caption{The probability for the full debt repayment $P_1$, the expected loss of the state $\mE[L_1]$ and the expected gain of the PC $\mE[G_1]$ as functions of $\alpha$.}
		\label{tab:loss_prob2}
\end{table}
\end{Bsp}
\subsection{A payback variant above a certain return} 
We will here assume a different variant of paying (back) the difference $C_j-C_0$ after the investment of $\alpha (C_j-C_0)$ has been done and the investment result has realised. This variant will be considered in detail in Section \ref{sec:opti_approach} in a continuous-time framework that allows a more flexible version.\\[2mm]
\noindent
Let us therefore assume that the state agrees that the PC can keep at least a certain return of $b\ge -1$ \big(i.e. an amount of $(1+b)\alpha(C_j-C_0)$\big) at the end of the investment period. Whatever exceeds this - but does not exceed $C_j-C_0$, goes to the state. If, however, the investment result is less than $C_j-C_0$ then the possibly remaining part of $C_j-C_0$ will be taken over by the state. 
We denote by $R_j$ the amount of money that the PC can keep and by $D_j$ the amount of money to be used for the debt repayment at time $j$. This yields
\begin{align*}
R_j&=\begin{cases}
    \alpha(C_j-C_0)(1+b) &\mbox{: if $e^{\mu+\sigma W_1}>1+b$}, \\\alpha(C_j-C_0)e^{\mu+\sigma W_1} &\mbox{: else},
    \end{cases}
\\D_j&= \alpha(C_j-C_0)\left(e^{\mu+\sigma W_1}-(1+b)\right)^+\ .
\end{align*}
The explicit form of $D_j$ directly implies that we either need a value of $b$ close to $-1$ and a value of $\alpha$ close to $1$ or a very large value of $\alpha$ in the case of $b\approx 0$ to obtain a significantly high probability for a full payback of $C_j-C_0$ to the state at time $j+1$. Indeed, we have 
\begin{align*}
\mP\left[D_1\ge C_1-C_0\right]&=\mP\left[W_1\ge \frac{\ln \left(1+b +1/\alpha\right)-\mu}{\sigma}\right]\\
&=\Phi\left(\frac{\mu-\ln \left(1+b +1/\alpha\right)}{\sigma}\right)\;.
\end{align*}
However, in the case of $b = 0$ we will never obtain a probability for a full debt repayment higher than approximately $\Phi\Big(\frac{\mu}\sigma\Big)$, which will even decrease for bigger values of $b$. Further, a non-positive value of $b$ can only be preferable for the PC in the case of $\alpha<1$, but then the probability for a full debt repayment to the state decreases. \\[2mm]
\noindent
The probability of the full repayment for 1-year investment and the standard fund with $\mu=0.04, \sigma = 0.2$ are summarised in Table \ref{tab:loss_prob3}.
\begin{table}[h]
	\centering
		\begin{tabular}{r|cccccc}
			$\alpha$ & 0.8 & 0.9 & 1 & 
			1.25 & 2 & 10 \\
			 \mbox{$b$}\hspace{0.6cm} & & & & & &\\
			\hline
			0.02 & $<$0.001 & $<$0.001 & $<$0.001 & 
			0.003 & 0.0291 &0.357 \\
			0 & $<$0.001 & $<$0.001 & $<$0.001 & 
			0.003 & 0.0338 &0.391\\
			-0.5 & 0.005 & 0.014 & 0.034 &  
			0.108& 0.5793 &0.997 \\
			-0.75 & 0.034 & 0.090 & 0.018 & 
			0.421& 0.9493 &1\\
			-0.9 & 0.097	& 0.224 &	0.391 &		
			0.707&	0.9971&1\\
			-0.95 & 0.133 &	0.292 &	0.482 &	
			0.794 &	0.9993 &1 \\
			-1 & 0.180 &	0.372 &	0.579 &	
			0.867 &	0.9999 &1
		\end{tabular}
		\caption{Probability for the full debt repayment as a function of $\alpha$ and $b$.}
		\label{tab:loss_prob3}
\end{table}
\noindent
The values obtained in Table \ref{tab:loss_prob3} show some clear indications: 
\begin{itemize}
 \item Even in the case of $b=-1$, a high probability (i.e. one above $80\%$) such that this investment result exceeds $C_1-C_0$ is only obtained for the cases of $\alpha \ge 1.25 $. In the latter case, the PC is better off by paying the difference $C_1-C_0$ directly to PAYG. 
 \item For values of $\alpha < 1$ the payback probabilities are far from being satisfactory for the state. 
\end{itemize}
To get a motivation for the state to allow a positive value of $b$ which also makes this kind of investment attractive for the PC, we calculate 
\begin{align*}
  \mE[D_1]&= \alpha(C_1-C_0)\bigg\{e^{\mu+\frac{\sigma^2}2} \Phi\left(\frac{\mu+\sigma^2-\ln\left(1+b\right)}{\sigma}\right)  
\\&\quad{}	 - \Phi\left(\frac{\mu-\ln\left(1+b\right)}{\sigma}\right) (1+b) \bigg\}\;,\\
	\mE[R_1]&= \alpha(C_1-C_0)\bigg\{e^{\mu+\frac{\sigma^2}2} \Phi\left(-\frac{\mu+\sigma^2-\ln\left(1+b\right)}{\sigma}\right)  \\
	& \quad {}+ \Phi\left(\frac{\mu-\ln\left(1+b\right)}{\sigma}\right) (1+b) \bigg\}\;.
\end{align*}
With high values of $\alpha$, we can see that at least in expectation both the state and the PC are winning. To understand this, note that the final wealth of the PC only has to exceed $(\alpha-1)(C_1-C_0)$ as without fund investment the PC has to pay $C_1-C_0$ in addition to $C_0$ anyway.  
Table \ref{tab:loss_probalpha} shows that there is a potential for both the PC and the state to gain in expectation. However, both have to take their shortfall risks into account, and it is by far not clear if the PC can afford such a high investment.
\begin{table}[h]
	\centering
		\begin{tabular}{r|ccc|ccc}
			$\alpha$   & 10   & 10   & 10    & 20   & 20   & 20 \\
			$b$        &0.030 &0.005 &-0.070 &0.153 &0.100 &0.009\\	
			\hline
$\mE[D_1]$ \hspace{4mm}& 0.1  &0.114 &0.162  &0.1   &0.137 &0.224 \\
$\mE[R_1]$ \hspace{4mm}&0.962 &0.948 &0.9    &2.204 &1.987 &1.9\\
		\end{tabular}		
		\caption{The expected payment to the state and the expected net fund value of the PC after (partial) payback as functions of $\alpha$ and $b$.}
		\label{tab:loss_probalpha}
\end{table}
\subsection{Variant B: A granted credit}
As a second approach, the state again pays the difference of $C_j-C_0$ at times $j=1,...,T$, where for the moment we assume that the values $C_j$ are deterministic. However, the state considers these payments as a credit with zero interest rate\footnote{\textcolor{black}{This is a reasonable assumption given the current ultra low interest rate environment in Europe. \color{black}}} to the PC. This construction allows the PC to invest money in a fund for a longer time and delays the full payback to time $T$. The potential gains from the fund investment are then used 
\begin{itemize}
	\item to pay back the credit and
	\item to build up money as reserves for the own future. 
\end{itemize}
Thus, it remains to find the optimal additional amount of money, that the PC invests into the fund, and the strategy to pay back the credits granted by the state. As in Variant A, we consider the following strategy:\\[2mm] 
\noindent
At time $j-1$ the PC contributes $C_0 + \alpha (C_j-C_0)$ with $\alpha > 0$, $j=1,...,T$. $C_0$ directly goes to PAYG while $\alpha (C_j-C_0)$ is invested into the fund. Thus, at time $j$ the PC has a fund position $F_j$ (before payment of the new contribution), where
\begin{align*}
  &F_j := \alpha \sum_{k=1}^{j}{\left((C_{j-k+1}-C_0)e^{\mu k + \sigma (W_{j+1}-W_{j-k+1})}\right)}
 \\&\mE\left[F_j\right] = \alpha \sum_{k=1}^{j}{(C_{j-k+1}-C_0)e^{(\mu + \frac{\sigma^2}{2})k}}                  
\end{align*}
for $j=1,...,T$. In the special case of $C_j=\bar{C}$ for all $j=1,...,T$, we have
\begin{equation*}
 \mE\left[F_j\right] = \alpha \sum_{k=1}^{j}{(C_{j-k+1}-C_0)e^{(\mu + \frac{1}{2}\sigma^2)k}} 
                     = \alpha (\bar{C}-C_0) e^{\mu+\frac{1}{2}\sigma^2}\frac{1-e^{(\mu + \frac{1}{2}\sigma^2)j}}{1-e^{\mu + \frac{1}{2}\sigma^2}} 
\end{equation*}
We assume that the full credit sum, defined as
\[
C_{\rm total}:=\sum_{j=1}^T{(C_j-C_0)}
\]
will be paid to the state at time $T$. \\[1mm] 
\noindent
We thus arrive at
\begin{Lem}
For an $\alpha > 0$ and the above investment strategy into the fund, we have: \\[1mm]
\noindent
a) In the case of payments of $\alpha (C_j-C_0)$ at time $j$, $j=1,...,T$ into the fund, the credit can be fully paid back in expectation at time $T$ if we choose
\begin{equation*}
\alpha =\alpha^* := \frac{\sum_{j=1}^T{(C_j-C_0)}}{\sum_{j=1}^{T}{(C_{j}-C_0)e^{(\mu + \frac{1}{2}\sigma^2)(T+1-j)}}} \ .
\end{equation*}
In the particular case of $C_j=\bar{C}$ for all $j=1,...,T$, this simplifies to 
\begin{equation*}
\alpha= \alpha^*=T \frac{1}{e^{\mu + \frac{1}{2}\sigma^2}}\frac{1-e^{\mu + \frac{\sigma^2}{2}}}{1-e^{(\mu + \frac{1}{2}\sigma^2)T}} \ .
\end{equation*}
b) The expected gain for the PC in nominal values by comparing the credit sum $\sum{(C_j-C_0)}$ with the invested amount $\alpha^*\sum{(C_j-C_0)}$ is strictly positive as we have $\mu +\frac{\sigma^2}{2} > 0$.
\end{Lem}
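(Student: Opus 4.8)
The plan is to treat part a) as a direct consequence of the expected–fund–value formula recorded immediately before the lemma, and part b) as an elementary sign comparison inside $\alpha^*$. Both parts are expectation-level computations and require no martingale or distributional machinery beyond the moment generating function of a normal increment.

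For part a), I would start from the expression for $\mE[F_j]$ with $j=T$, namely $\mE[F_T]=\alpha\sum_{k=1}^{T}(C_{T-k+1}-C_0)e^{(\mu+\frac{\sigma^2}{2})k}$, which itself follows from linearity of expectation together with $\mE\big[e^{\sigma(W_{T+1}-W_{T-k+1})}\big]=e^{\sigma^2 k/2}$ (the increment $W_{T+1}-W_{T-k+1}$ having variance $k$). The defining requirement is that the credit be repaid on average at the terminal time, i.e.\ $\mE[F_T]=C_{\rm total}=\sum_{j=1}^T (C_j-C_0)$. Substituting the index $j:=T-k+1$ (so that $k=T+1-j$, and $k$ running from $1$ to $T$ corresponds to $j$ running from $T$ down to $1$) rewrites the sum as $\alpha\sum_{j=1}^T (C_j-C_0)e^{(\mu+\frac{\sigma^2}{2})(T+1-j)}$. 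Solving the resulting linear equation for $\alpha$ yields $\alpha^*$ exactly as claimed. The special case $C_j=\bar C$ then follows by inserting the closed-form geometric sum for $\mE[F_T]$ already displayed above the lemma and dividing through by $T(\bar C-C_0)$; no new argument is needed, only simplification of the geometric factor.

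For part b), the quantity to be signed is the nominal difference between what the PC would otherwise pay, $\sum_{j=1}^T(C_j-C_0)$, and what is actually invested, $\alpha^*\sum_{j=1}^T(C_j-C_0)$, that is $(1-\alpha^*)\sum_{j=1}^T(C_j-C_0)$. Since each $C_j-C_0>0$, it suffices to show $\alpha^*<1$. Because $\mu+\frac{\sigma^2}{2}>0$ and $T+1-j\ge 1$ for every $j\in\{1,\dots,T\}$, each factor $e^{(\mu+\frac{\sigma^2}{2})(T+1-j)}$ strictly exceeds $1$; comparing the denominator of $\alpha^*$ with its numerator term by term (both weighted by the same positive coefficients $C_j-C_0$) shows the denominator is strictly larger, whence $\alpha^*<1$ and the nominal gain is strictly positive.

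I do not expect a genuine obstacle here, since the result reduces to one expectation computation plus a monotonicity comparison. The only points requiring care are the bookkeeping in the change of summation variable $k\leftrightarrow T+1-j$ in part a), and pinning down precisely which two nominal quantities are being compared in part b), so that the sign statement is unambiguous and the role of the hypothesis $\mu+\frac{\sigma^2}{2}>0$ is made explicit.
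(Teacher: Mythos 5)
Your proposal is correct and follows essentially the same route as the paper's (very terse) proof: part a) is obtained by equating $\mE[F_T]$ with the total credit sum $\sum_{j=1}^T(C_j-C_0)$ and solving the resulting linear equation for $\alpha$, and part b) reduces to $\alpha^*<1$. The paper dismisses b) as ``obvious''; your term-by-term comparison of numerator and denominator using $e^{(\mu+\frac{1}{2}\sigma^2)(T+1-j)}>1$ is exactly the intended justification.
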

\begin{proof}
a) follows from the required equality between the sum of (expected) payments by the state and $\mE(F_T)$ which then has to be solved for $\alpha^* C_0$. \\b) is obvious.
\end{proof}
In contrast to the one period setting of Variant A, we cannot easily calculate the probability of a loss by the state or the expected net fund value for the PC at time $T$. The reason is a well-known fact that the distribution of the sum of log-normally distributed random variables is not explicitly known. We therefore illustrate the performance of this strategy and its differences to Variant A in the next section.
\subsection{Comparison of Variant A and Variant B}
As we do not have explicit distributional results for Variant B, we present a numerical example where we choose $T=10$ and also $C_j=\bar{C}=1.1\cdot C_0$. We then simulate $10,000$ realisations of the fund performance and estimate the probability of a shortfall/loss (i.e. the event of $F_T< T\cdot(\bar{C}-C_0)$), the expected loss for the state and the expected net wealth for the PC that remains in the fund at time $T$. We again consider the two different fund parameter sets already used for illustrating Variant A. We especially want to compute (or more precisely, estimate via Monte Carlo simulation)
\begin{align*}
P_{\rm shortfall}&:=\mP\left[F_T\le T\cdot (\bar{C}-C_0\right] =\mP\left[F_T\le 1\right],\\
E_{\rm shortfall}&:=\mE\left[(1-F_T)^+\right], \\
E_{\rm final net fund}&:=\mE\left[(F_T-1)^+\right]\;. 
\end{align*}
For this, note first that in our setting, we can explicitly calculate
\begin{equation*}
\mE[F_T] = \alpha\cdot 0.1\cdot e^{\mu+\frac{1}{2}\sigma^2}\frac{e^{10(\mu+\frac{1}{2}\sigma^2)}}{1-e^{\mu+\frac{1}{2}\sigma^2}} \ .
\end{equation*}
By the well known relation
\[
\mE[X] = \mE[X^+]-\mE[X^-]
\] 
we can thus estimate/compute $E_{\rm final net fund}$ from $\mE[F_T]$ and the estimate for $E_{\rm total loss}$.
\subsubsection{\textbf{A standard fund:} $\mu = 0.04, \sigma=0.2$.}
\noindent
Under our above assumptions on the fund parameters and $\bar{C}-C_0=0.1\cdot C_0 = 0.1$, $T=10$ we obtain first: 
\begin{table}[H]
	\centering
		\begin{tabular}{c|cccccc}
			$\alpha$ & 1 & 1.05 & 1.1 & 1.15 & 1.2 & 1.25 \\
			\hline
			$P_{\rm shortfall}$ & 0.272 & 0.232 & 0.197 & 0.171 & 0.145& 0.123\\
			$E_{\rm shortfall}$ & 0.054 & 0.045 & 0.036 & 0.030 & 0.024& 0.020 \\
			$E_{{\rm final net fund}}$ & 0.470 & 0.530 & 0.593 & 0.657 & 0.722& 0.789
		\end{tabular}
		\caption{Probability for a shortfall, expected shortfall, expected net wealth as functions of $\alpha$.}
		\label{tab:loss_prob4}
\end{table}
\noindent
To see the time effect of investment one can e.g. compare the additional total nominal payment of $0.25$ with the net gain of $0.789$ in the case of the choice of $\alpha=1.25$. Of course, one should take interest rates into account. However, in the current ultra low interest rate environment in Europe, using nominal values is a good approximation. This is especially true in a country such as Germany with negative interest rates for short term investments. \\[2mm] 
\noindent
To compare the performance of Variant B with that of Variant A, we now consider the corresponding values when we use 
\[
\alpha=\alpha_{A,90}\approx 1.242\ ,
\]
the $\alpha$ that leads to a $90\%$ security level for a full payment to the state at each single payment time. It comes along with the probability of at least one loss of $0.35$ and a total expected loss of $0.0087$. It is clear that this loss probability is not comparable to those of Variant B. The break even value $\alpha_{B}$ such that the expected total loss of the state equals that of Variant A is given by $\alpha_B=1.44$. While it is comparably high, it leads to an expected net fund wealth of $1.047$ compared to the additional nominal payments of $0.44$.
\subsubsection{A very well diversified fund: $\mu = 0.04, \sigma=0.1$.}
\noindent
For this set of fund parameters we obtain: 
\begin{table}[h]
	\centering
		\begin{tabular}{c|cccccc}
			$\alpha$ & 1 & 1.05 & 1.1 & 1.15 & 1.2 & 1.25 \\
			\hline
			$P_{\rm shortfall}$ & 0.131 & 0.087 & 0.053 & 0.035 & 0.020 & 0.015\\
			$E_{\rm shortfall}$ & 0.012 & 0.007 & 0.004 & 0.002 & 0.001& 0.001 \\
			$E_{\rm final net fund}$ & 0.304 & 0.364 & 0.425 & 0.488 & 0.552& 0.616
		\end{tabular}
		\caption{Probability for a shortfall, expected shortfall, expected net wealth as functions of $\alpha$.}
		\label{tab:loss_prob5}
\end{table}
\noindent
As in the case of Variant A the loss figures now clearly improve while the net fund wealth performs slightly weaker due to the small variance. \\[2mm]
\noindent
Again, we can compare
\[
\alpha=\alpha_{A,90}\approx 1.0925\ ,
\]
the alpha that leads to a $90\%$ security level for a full payment to the state at each single payment time with its counter part in Variant B that leads to the same expected total loss of $0.004515$. This time, the break even value $\alpha_{B}=1.0927$ is very close to $\alpha_{A,90}$. It leads to an expected net fund wealth of $0.416$ compared to the additional nominal payments of $0.0927$.
\section{Optimisation Approach} \label{sec:opti_approach}
In this part of the paper, we assume the same credit scheme described in the previous section. However, we consider a different set of strategies that can be applied in order to repay the debt to the state. We assume that the state requires the PC to share the profits from an investment continuously in time, i.e. the PC has to transfer any excess above some level $b$ (to be optimised) into a special bank account during a $1$-year period. It means that once an increase in contribution becomes necessary, the state contractually agrees to pay the difference between the old and the new contributions in the following, say, 10 years. In return, the PC has to invest a certain amount of money at the beginning of every year so that at the end of the year the debt to the state can be repaid and the PC can get some positive return on investment.  
\\Our objective is to maximise the amount of money remaining after the debt repayment to the PC. The emphasis lies on the investment of the minimal possible amount such that the probability that the debt can be fully repaid to the state stays above some pre-specified level. Note that the PC carries the investment risk where the state has the risk that the PC will not be able to repay the debt. It means in particular that we allow for negative values of the level $b$.  
\\As before we model the price of the fund under consideration by a geometric Brownian motion
\[
F_t=F_0 e^{\mu t+\sigma W_t}\;.
\] 
We denote by $D_t(b)$ the part of the gains, depending on the chosen barrier $b$, that is transferred to the debt account and by $R_t(b)$ the part remaining to the PC for the time horizon $t$.
\\Let now $b\in[-1,1]$ be arbitrary but fixed. If the return at time $t$ compared to the initial value $F_0$ 
\[
\frac{F_t-F_0}{F_0}= e^{\mu t+\sigma W_t}-1
\]
exceeds the level $b$, the excess $\big(e^{\mu t+\sigma W_t}-1-b\big)$ is transferred to the debt bank account. Mathematically it means that the process $\{e^{\mu t+\sigma W_t}\}$ is downward-reflected at $1+b$. Since under the logarithm the reflection property remains preserved, we can consider the downward-reflection of $\{\mu t+\sigma W_t\}$ at $\ln(1+b)$. For a Brownian motion with drift we know that the reflected process, below the level $\ln(1+b)$, at time $t$ is given by 
\[
\mu t+\sigma W_t -\big(\max\limits_{0\le s\le t}\{\mu s+\sigma W_s\}-\ln(1+b)\big)^+\;.
\]
Therefore, transforming the downward-reflection back, gives the part remaining to the PC 
\begin{equation*}
R_t(b):=  e^{\mu t + \sigma W_{t} -  \left(\max \limits_{0 \leq s \le t}  \{\mu s+\sigma W_{s}\} -\ln(1+b) \right)^{+}},
\end{equation*}
confer for details for instance \cite[p.\ 77]{BS}. It remains to find the expression for the debt part, denoted by $D_t(b)$, to be accumulated on the special account.  
\begin{Lem}
Let $b$ be arbitrary but fixed. For any time $t\ge 0$ it holds
\[
D_t(b)=(1+b) \Big(\max \limits_{0 \leq s \leq t}  \{\mu s+\sigma W_{s}\}-\ln(1+b)\Big)^+\;.
\]
\end{Lem}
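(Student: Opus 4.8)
The plan is to read $D_t(b)$ directly off the Skorokhod decomposition that already produces the formula for $R_t(b)$, identifying the cumulative transfer to the debt account with the reflection term evaluated at the barrier. Throughout I write $X_t:=\mu t+\sigma W_t$, $M_t:=\max_{0\le s\le t}X_s$, $c:=\ln(1+b)$ and $L_t:=(M_t-c)^+$, so that the reflected log-process is $Y_t=X_t-L_t$ and $R_t(b)=e^{Y_t}$.

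First I would record the two structural properties of $L_t$ that drive the argument: it is continuous, non-decreasing with $L_0=0$, and it increases only on the set $\{Y_t=c\}$, equivalently $\{R_t(b)=1+b\}$. Both are immediate from $L_t=(M_t-c)^+$: the running maximum $M_t$ grows only when $X_t=M_t$, and on that set $Y_t=X_t-L_t=X_t-(M_t-c)=c$ whenever $M_t>c$, while $L_t$ is frozen at $0$ as long as $M_t\le c$.

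Next I would differentiate $R_t(b)=e^{Y_t}$ via It\^o's formula. Since $L_t$ has bounded variation, $\langle Y\rangle_t=\langle X\rangle_t=\sigma^2 t$, and one obtains
\[
\md R_t(b)=R_t(b)\Big(\big(\mu+\tfrac{\sigma^2}{2}\big)\md t+\sigma\md W_t\Big)-R_t(b)\md L_t.
\]
The bracketed term is exactly the self-financing growth of the fund position the PC keeps; the remaining term $-R_t(b)\md L_t$ is the instantaneous amount skimmed off at the reflecting barrier, i.e.\ precisely what is moved to the debt account. Hence $\md D_t(b)=R_t(b)\md L_t$, and because $\md L_t$ charges only the set where $R_t(b)=1+b$, I may freeze $R_t(b)$ at $1+b$ under the differential, giving $\md D_t(b)=(1+b)\md L_t$. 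Integrating from $0$ to $t$ with $D_0(b)=L_0=0$ yields $D_t(b)=(1+b)L_t=(1+b)\big(M_t-\ln(1+b)\big)^+$, as claimed.

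The step I expect to be the crux is the interpretive one: justifying that the value transferred to the debt account is the reflection term $R_t(b)\md L_t$ and not merely $\md L_t$. Because the reflection is carried out in the logarithmic scale, a push of size $\md L_t$ of the log-process $Y$ corresponds, via the exponential map at the barrier level $c=\ln(1+b)$, to a transfer of $e^{c}\md L_t=(1+b)\md L_t$ units of actual fund value. The support property of $\md L_t$ established in the first step is exactly what allows $R_t(b)$ to be replaced by the constant $1+b$ inside the differential; once this is granted, the remaining integration is routine.
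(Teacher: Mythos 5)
Your argument is correct, but it takes a genuinely different route from the paper. The paper proves the identity constructively: it discretises the skimming procedure into withdrawals of size $\varepsilon$ triggered at the successive hitting times $T_n$ of the level $1+b+\varepsilon$, counts them with $N_t(\varepsilon)$, observes that each withdrawal corresponds to an increment $\ln\big(\tfrac{1+b+\varepsilon}{1+b}\big)$ of the running maximum of $\mu s+\sigma W_s$, and then computes $\lim_{\varepsilon\to 0}\varepsilon N_t(\varepsilon)=(1+b)\big(\max_{0\le s\le t}\{\mu s+\sigma W_s\}-\ln(1+b)\big)^+$ almost surely. You instead work at the SDE level: It\^o's formula applied to $R_t(b)=e^{Y_t}$ with $Y_t=X_t-L_t$ isolates the reflection term $-R_t(b)\,\md L_t$, you identify this as the instantaneous outflow to the debt account, and the support property of $\md L_t$ lets you replace $R_t(b)$ by $1+b$ before integrating. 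Both proofs must at some point translate the verbal description of ``transferring the excess'' into mathematics, and this is where each makes its key move: the paper's $\varepsilon$-approximation makes that translation explicit and elementary (it needs only the hitting-time characterisation of the running maximum), while yours is shorter and exposes the Skorokhod structure, at the price of postulating that the withdrawn value is $R_t(b)\,\md L_t$ rather than deriving it from a limiting scheme. One small caveat: your final integration uses $L_0=0$, which holds only for $b\ge 0$; for $b<0$ (a case the paper relies on heavily in its examples) one has $L_0=-\ln(1+b)>0$, corresponding to an instantaneous skim at time zero, and the conclusion $D_t(b)=(1+b)L_t$ still holds provided you set $D_0(b)=(1+b)L_0$ rather than $0$. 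With that adjustment the two proofs agree for all admissible $b$.
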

\begin{proof}
Recall first that $D_t(b)$ describes the excess of the return from the fund above the pre-specified boundary of $1+b$. 
\\In order to prove our claim consider an $\varepsilon>0$, define for $n\ge 1$
\begin{align*}
&T_1:=\inf\{s\ge 0:\; e^{\mu s+\sigma W_s}
=1+b+\varepsilon\}\\\quad&=\inf\{s\ge 0:\; \mu s+\sigma W_s=\ln(1+b+\varepsilon)\},
\\&T_{n+1}:=\inf\{s\ge T_{n}: e^{\ln(1+b)+\mu (s-T_n)+\sigma \big(W_{s}-W_{T_{n}}\big)}=1+b+\varepsilon\}
\\\quad &=\inf\Big\{s\ge T_{n}: \ln(1+b)+\mu (s-T_n)+\sigma \big(W_{s}-W_{T_{n}}\big)
 =\ln(1+b+\varepsilon)\Big\}
\end{align*}
and let 
\[
N_t(\varepsilon):=\sup\{n\ge 1: \; T_n\le t\}\;.
\]
It means we approximate the procedure of skimming off the return of the considered geometric Brownian motion by a process, depending on $\varepsilon$, where we first wait until $e^{\mu s+\sigma W_s}$ hits the level $1+b+\varepsilon$ and pay $\varepsilon$ immediately into the debt account. Subsequently, we wait until the considered geometric Brownian motion (now with the start value $1+b$) again hits the level $1+b+\varepsilon$ and transfer $\varepsilon$ into the debt account. The amount transferred into the debt account up to time $t$ is given by $\varepsilon N_t(\varepsilon)$.
Looking at the stopping times $T_n$, one sees that the process after withdrawals, denote it by $F^\varepsilon_t$, is given by
\[
F^\varepsilon_t= \Big(\frac{1+b}{1+b+\varepsilon}\Big)^{N_t(\varepsilon)}e^{\mu t +\sigma W_t}=\Big(1-\frac{N_t(\varepsilon)\varepsilon}{N_t(\varepsilon)(1+b+\varepsilon)}\Big)^{N_t(\varepsilon)}e^{\mu t +\sigma W_t}\;.
\]
Consider the part $\Big(1-\frac{N_t(\varepsilon)\varepsilon}{N_t(\varepsilon)(1+b+\varepsilon)}\Big)^{N_t(\varepsilon)}$. Since 
\[
\lim\limits_{n\to \infty}\Big(1-\frac{x}{n}\Big)^{n}= e^{-x}\;,
\]
we just need to consider $\lim\limits_{\varepsilon \to 0}\varepsilon N_t(\varepsilon)$. 
The structure of $T_n$ as hitting times of an arithmetic Brownian motion yields, confer for instant \cite[p.\ 95]{Karatzas}:
\[
N_t(\varepsilon)=\sup\bigg\{n\in\N: \max\limits_{0\le s\le t}\{\mu s+\sigma W_s\}-\ln(1+b)\ge n \cdot\ln\Big(\frac{1+b+\varepsilon}{1+b}\Big)\bigg\}\;.
\]
Therefore, we can conclude
\begin{align*}
\lim\limits_{\varepsilon\to 0} \varepsilon N_t(\varepsilon)&=\lim\limits_{\varepsilon\to 0} \frac{\varepsilon}{\ln\Big(\frac{1+b+\varepsilon}{1+b}\Big)}\ln\Big(\frac{1+b+\varepsilon}{1+b}\Big)N_t(\varepsilon)
\\&=(1+b)\Big(\max\limits_{0\le s\le t}\{\mu s+\sigma W_s\}-\ln(1+b)\Big)^+ \quad \mbox{a.s.}
\end{align*}
This means in particular that the process $F^{\varepsilon}_t$ converges to $R_t(b)$ as $\varepsilon\to 0$ a.s., implying that $\varepsilon N_t(\varepsilon)$ converges to $D_t(b)$ a.s. for $\varepsilon \to 0$. 
Therefore:
\begin{align*}
D_t(b)=\lim\limits_{\varepsilon\to 0} \varepsilon N_t(\varepsilon)=(1+b)\Big(\max\limits_{0\le s\le t}\{\mu s+\sigma W_s\}-\ln(1+b)\Big)^+\;.
\end{align*}
\end{proof}
Like in the previous section, we assume that the initial investment is given by $F_0=\alpha (C_1-C_0)$ with some positive $\alpha$, and require for every period
\begin{equation}
\mP\Big[F_0 D_t(b)\ge C_1-C_0\Big]\ge p \;\Leftrightarrow\; \mP\Big[ D_t(b)\ge \frac 1\alpha\Big]\ge p,\label{credibility}
\end{equation}
for some given $p\in[0,1]$. This credibility condition ensures that the level $b$ is not chosen too high and the debt to the state will be paid with at least probability $p\cdot 100\%$.
\\Assume the percentage $p$ is fixed contractually, then we can find the optimal pair $(b,\alpha)$ such that the expected loss of the PC is minimised. Let
\begin{align*}
V(b):&=\mE\big[R_1(b)\big]=\mathbb{E}\Bigg[ e^{\mu + \sigma W_{1} -  \left(\max \limits_{0 \leq s \leq 1}  \{\mu s+\sigma W_{s}\} -\ln(1+b) \right)^{+}}\Bigg]\;,
\\U(b):&=\mE[D_1(b)]=(1+b)\mE\Big[\Big(\max\limits_{0\le s\le 1}\{\mu s+\sigma W_s\}-\ln(1+b)\Big)^+\Big]\;,
\end{align*}
i.e. $V(b)$ is the expected return on investment of the PC and $U(b)$ is the expected value of the accumulated debt account.
\\In order to proceed with our derivations and also for numerical calculations we will need to consider the density function of $\max\limits_{0\le s\le t}\{\mu s+ \sigma W_s\}$. This density, $g(y;t)$, is given by
\begin{align}
g(y;t,\mu)=\frac2{\sqrt{2\pi t}\sigma}e^{-\frac{(y-\mu t)^2}{2\sigma^2 t}}-\frac {e^{\frac{2\mu y}{\sigma^2}}\mu}{\sigma^2} {\rm Erfc}\Big(\frac{y+\mu t}{\sqrt{2 t}\sigma}\Big),\label{density}
\end{align}
with ${\rm Erfc(x)}=\frac{2}{\sqrt{\pi}}\int_x^\infty e^{-z^2}\md z$, confer for instance Borodin \& Salminen, p. 250 formula 1.1.4 for the distribution function of  $\max \limits_{0 \leq s \leq 1}  \{\mu s+\sigma W_{s}\}$. 
\\For simplicity we write $g(y;\mu)$ if $t=1$. \medskip
\\The normalised loss (the spent amount exceeding the required payments of $C_1-C_0$, i.e. the real loss divided by $C_1-C_0$) is given by
\begin{align}
L(b,\alpha):=\alpha  -\alpha V(b)-1\to\min!\label{optim}
\end{align}
It is clear that $V(b)$ is strictly increasing in $b$, meaning that the maximum of $V(b)$, and the minimum of the loss, is attained at the maximal $b$, allowed by the credibility condition \eqref{credibility}. However, this maximal $b$ will again depend on $\alpha$. Thus, we have to specify the set of admissible pairs $(b,\alpha)$ before we can solve the optimisation problem \eqref{optim}. Note first that the PC does not have an infinite amount of money on her/his disposal. Therefore, we have to restrict the set of admissible $\alpha$ to some reasonable values and introduce a liquidity restriction boundary $\alpha^*>0$ with $0<\alpha\le \alpha^*$. 
\begin{Lem}\label{Bem:1}
Assume $p$ in \eqref{credibility} is fixed, $\tilde p$ is given by
\[
\int_{\tilde p}^\infty g(y;t,\mu)\md y = p
\] 
and $\alpha^*<\infty$ is the liquidity restriction boundary. 
\\If $\alpha^*< e^{1-\tilde p}$ then the PC should prefer to pay the required increase in contribution directly into the PAYG account. 
\\If $\alpha^*\ge e^{1-\tilde p}$, the pair minimising the loss \eqref{optim} is $(b^*,\alpha^*)$, where $b^*$ is implicitly given by
\[
\frac 1{(1+b^*)(\tilde p-\ln(1+b^*))}=\alpha^*
\]
\end{Lem}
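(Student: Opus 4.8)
The plan is to turn the credibility condition \eqref{credibility} into an explicit constraint on the pair $(b,\alpha)$, to identify which constraint is active at an optimum, and then to reduce the problem to a one-dimensional minimisation along that active boundary. Write $M:=\max_{0\le s\le 1}\{\mu s+\sigma W_s\}$. By the preceding lemma $D_1(b)=(1+b)(M-\ln(1+b))^+$, and since $1+b>0$ the event $\{D_1(b)\ge 1/\alpha\}$ equals $\{M\ge \ln(1+b)+\frac1{\alpha(1+b)}\}$. As $\mP[M\ge x]=\int_x^\infty g(y;\mu)\md y$ is continuous and strictly decreasing and $\tilde p$ satisfies $\mP[M\ge\tilde p]=p$, condition \eqref{credibility} is equivalent to
\[
h_\alpha(b):=\ln(1+b)+\frac1{\alpha(1+b)}\le \tilde p .
\]

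Setting $u=1+b$, the function $u\mapsto \ln u+\frac1{\alpha u}$ has a unique minimum at $u=1/\alpha$ with value $1-\ln\alpha$. Hence admissible $b$ exist if and only if $1-\ln\alpha\le\tilde p$, i.e.\ $\alpha\ge e^{1-\tilde p}$. Combined with the liquidity cap $0<\alpha\le\alpha^*$ this already yields the first dichotomy: if $\alpha^*<e^{1-\tilde p}$ then every admissible $\alpha$ gives $\min_b h_\alpha(b)=1-\ln\alpha>\tilde p$, so \eqref{credibility} fails for all $b$, the security level $p$ cannot be met, and the PC should pay the increase directly into PAYG.

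Assume now $\alpha^*\ge e^{1-\tilde p}$. Since $V$ is strictly increasing, $\partial_b L(b,\alpha)=-\alpha V'(b)<0$, so $L$ is strictly decreasing in $b$; at any minimiser the credibility constraint must be binding and $b$ equals the larger root $b(\alpha)$ of $h_\alpha(b)=\tilde p$, i.e.\ the root on the increasing branch $1+b>1/\alpha$. This root increases with $\alpha$ and satisfies $\alpha=\frac1{(1+b)(\tilde p-\ln(1+b))}$, which for $\alpha=\alpha^*$ is exactly the defining relation for $b^*$. It then remains to minimise $\phi(\alpha):=L(b(\alpha),\alpha)=\alpha\big(1-V(b(\alpha))\big)-1$ over $\alpha\in[e^{1-\tilde p},\alpha^*]$. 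I would offer a clean argument for the economically relevant case and a general one otherwise. When $V(b^*)\ge1$, no calculus is needed: every admissible pair has $b\le b(\alpha)\le b(\alpha^*)=b^*$ (so $V(b)\le V(b^*)$) and $\alpha\le\alpha^*$, and a one-line case split on the sign of $1-V(b)$ gives $\alpha(1-V(b))\ge\alpha^*(1-V(b^*))$, hence $L(b,\alpha)\ge L(b^*,\alpha^*)$. In general I would differentiate, using $b'(\alpha)=\big[\alpha(\alpha-\tfrac1{1+b})\big]^{-1}$, to get
\[
\phi'(\alpha)=\big(1-V(b)\big)-\frac{V'(b)}{\alpha-\frac1{1+b}},
\]
and show $\phi'<0$ throughout, forcing the minimum to $\alpha=\alpha^*$ and hence $b=b^*$.

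The hard part is precisely this last sign. It is immediate when $V(b)\ge1$, and it also holds near the threshold $\alpha=e^{1-\tilde p}$, where $\alpha-\tfrac1{1+b}\to0^+$ makes the subtracted term blow up. In the regime $V(b)<1$ away from the threshold it is genuine: using the identity $V'(b)=\frac1{1+b}\mE\big[R_1(b)\one_{\{M>\ln(1+b)\}}\big]$ (obtained by differentiating $V(b)=\mE[R_1(b)]$ in $\ln(1+b)$) together with $V(b)=\mE[R_1(b)]$, the inequality $\phi'<0$ reduces along the boundary to
\[
\mE\big[R_1(b)\one_{\{M>\ln(1+b)\}}\big]>\Big(1-\mE\big[e^{\mu+\sigma W_1}\one_{\{M\le\ln(1+b)\}}\big]\Big)\big(1-\tilde p+\ln(1+b)\big),
\]
whose verification needs the explicit joint law of $(W_1,M)$ underlying the density \eqref{density}. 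I expect establishing this inequality, equivalently the monotonicity of $\phi$, to be the main obstacle.
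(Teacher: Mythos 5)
Your reduction of \eqref{credibility} to the constraint $\ln(1+b)+\tfrac{1}{\alpha(1+b)}\le\tilde p$, the feasibility threshold $\alpha\ge e^{1-\tilde p}$ obtained from the minimum of $u\mapsto\ln u+\tfrac1{\alpha u}$ at $u=1/\alpha$, and the observation that the credibility constraint must bind at the larger root (the increasing branch) are exactly the content of the paper's proof, which phrases the same facts dually as $\alpha\ge\Delta(b):=\big[(1+b)(\tilde p-\ln(1+b))\big]^{-1}$ with $\min_b\Delta(b)=e^{1-\tilde p}$ attained at $b=e^{\tilde p-1}-1$. Up to that point your argument is correct and essentially identical to the paper's.

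The divergence is in what comes next, and here you are more careful than the source. The paper stops after describing the admissible set: it argues that the loss decreases in $b$, that the largest attainable $b$ is $b^*$ and requires $\alpha=\alpha^*$, and simply declares $(b^*,\alpha^*)$ optimal; it never examines $\phi(\alpha)=L(b(\alpha),\alpha)=\alpha\big(1-V(b(\alpha))\big)-1$ along the binding constraint, which is precisely the step you isolate. Your elementary case split for $V(b^*)\ge 1$ is correct and fully closes the argument in that regime. In the complementary regime, however, the global inequality $\phi'<0$ that you propose to establish is not merely hard, it is false: since $b(\alpha)\uparrow e^{\tilde p}-1$ and $\alpha-\tfrac1{1+b(\alpha)}\to\infty$ as $\alpha\to\infty$, your own formula gives $\phi'(\alpha)\to 1-V(e^{\tilde p}-1)$, which is strictly positive whenever $V(e^{\tilde p}-1)<1$ (e.g.\ the paper's $p=0.7$ example, where $V(b)<1$ for all $b<0.2030$ while $e^{\tilde p}-1\approx 0.0975$). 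For such parameters and $\alpha^*$ large the boundary loss is eventually increasing and the minimiser is an interior $\alpha<\alpha^*$, contradicting the stated conclusion. So do not expect to prove the inequality you wrote down; the obstacle you flag is a genuine gap in the lemma itself that the paper's proof skips, and your $V(b^*)\ge1$ argument is the correct salvageable part, covering the regime in which the scheme is profitable and actually applied.
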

\begin{proof}
Consider the credibility condition \eqref{credibility}. Using the density \eqref{density}, we get
\begin{align*}
\mP\Big[D_t(b)\ge \frac 1\alpha\Big]&=\mP\Big[(1+b)\Big(\max\limits_{0\le s\le t}\{\mu s+\sigma W_s\}-\ln(1+b)\Big)^+\ge \frac 1\alpha\Big]
\\&= \mP\Big[\max\limits_{0\le s\le t}\{\mu s+\sigma W_s\}\ge \frac 1{(1+b)\alpha}+\ln(1+b)\Big]
\\&=\int_{\ln(1+b)+\frac1{\alpha(1+b)}}^\infty g(y;t,\mu)\md y\;.
\end{align*}
Since $g(y;t,\mu)>0$ for all $y\in\R_+$, for every $p\in[0,1]$ there is a unique lower integral boundary $\tilde p\in\R_+$ such that
\[
\int_{\tilde p}^\infty g(y;t,\mu)\md y = p\;.
\]
Assume $p$ in \eqref{credibility} is fixed and $\tilde p$ is the corresponding lower integral boundary. Then the set of admissible pairs $(b,\alpha)$ is described by the inequality 
\[
\tilde p\ge \ln(1+b)+\frac1{\alpha (1+b)}\;,
\]
which is equivalent to $\alpha \ge \frac 1{(1+b)(\tilde p-\ln(1+b))}=:\Delta(b)$. Further, it holds 
\[
\Delta'(b)\begin{cases}\le 0\mbox{ $:$ $e^{\tilde p-1}-1\ge b$},
\\>0 \mbox{ $:$ $e^{\tilde p-1}-1< b$},
\end{cases}
\] 
i.e. $\Delta(b)$ attains its global minimum at $b=e^{\tilde p-1}-1>-1$ with $\Delta(e^{\tilde p-1}-1)=e^{-\tilde p+1}$. 
\\Since our target is to minimise the loss defined in \eqref{optim}, and it is done by the biggest allowed value of $b$, we can restrict our considerations to the set $b\ge e^{\tilde p-1}-1$, the area where $\Delta(b)$ is strictly increasing.  
\\Noting that $e^{\tilde p}-1>e^{\tilde p-1}-1$ and $\Delta(e^{\tilde p}-1)=\infty$, we define the set of allowed pairs $(b,\alpha)$ to be
\[
e^{\tilde p-1}-1\le b< e^{\tilde p}-1\;\; \mbox{and} \;\; \alpha\ge \frac 1{(1+b)(\tilde p-\ln(1+b))}\;.
\]
 Note that if $\alpha^*< \Delta(e^{\tilde p-1}-1)=e^{-\tilde p+1}$, then the set of admissible pairs is empty. 
\\If $\alpha^*\ge e^{-\tilde p+1}$, there is a unique $b^*\in [e^{\tilde p-1}-1,e^{\tilde p}-1)$ such that $\Delta(b^*)=\alpha^*$. And the set of admissible pairs shrinks to
\[
e^{\tilde p-1}-1\le b< b^*\;\; \mbox{and} \;\;  \frac 1{(1+b)(\tilde p-\ln(1+b))}\le\alpha\le\alpha^*\;.
\]
\end{proof}
The above lemma ensures that if we require $\alpha^*\ge e^{1-\tilde p}$, the minimal loss will be attained at $(b^*,\alpha^*)$. If $L(b^*,\alpha^*)$ is bigger than zero, this is a clear indicator that the funding strategy is not working, and paying the increase in contribution immediately into the PAYG account is more preferable for the PC.
\bigskip
\\
Note that if the accumulated amount exceeds the debt then the PC has an additional gain. Therefore, the entire expected normalised loss function $L_e$ is defined as follows:
\[
L_e(b,\alpha):=\alpha-\alpha V(b)-1-\alpha U(b)+1= \alpha-\alpha V(b)-\alpha U(b)\;.
\]
\begin{Lem}
The function $L_e$ is decreasing in $b$, if $\alpha>0$.
\end{Lem}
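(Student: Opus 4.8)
The plan is to exploit the factorisation $L_e(b,\alpha)=\alpha\big(1-V(b)-U(b)\big)$ together with $\alpha>0$, so that $L_e$ is decreasing in $b$ exactly when $V(b)+U(b)$ is nondecreasing. It therefore suffices to prove $\frac{\md}{\md b}\big(V(b)+U(b)\big)\ge 0$. Throughout I write $c:=\ln(1+b)$, $X_1:=\mu+\sigma W_1$ and $M_1:=\max_{0\le s\le 1}\{\mu s+\sigma W_s\}$, so that $R_1(b)=e^{X_1-(M_1-c)^+}$ and $D_1(b)=e^{c}(M_1-c)^+$, and I record that $X_1\le M_1$ almost surely.

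First I would differentiate $V(b)+U(b)=\mE[R_1(b)+D_1(b)]$ under the expectation. For fixed $(X_1,M_1)$ the integrand is continuous in $b$ with a single kink at $c=M_1$; on $\{M_1\le c\}$ it equals $e^{X_1}$, which is independent of $b$, while on $\{M_1>c\}$ it equals $e^{X_1-M_1+c}+e^{c}(M_1-c)$. Since the two branches agree at $c=M_1$ and $M_1$ has a continuous law, the kink contributes nothing, and a dominated-convergence bound (e.g.\ $|\partial_b(R_1+D_1)|\le 2+M_1+|c|$, which is integrable) justifies differentiating inside. Using $\frac{\md c}{\md b}=e^{-c}$ this gives
\[
\frac{\md}{\md b}\big(V(b)+U(b)\big)=\mE\Big[\big(e^{X_1-M_1}+M_1-c-1\big)\,\one_{\{M_1>c\}}\Big].
\]

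The crux is to show this expectation is nonnegative, and the integrand is not sign-definite pathwise. The trick I would use is to set $Z:=M_1-X_1\ge 0$ and write $M_1-c=(X_1-c)+Z$, which splits the bracket as
\[
e^{X_1-M_1}+M_1-c-1=(X_1-c)+\big(Z-1+e^{-Z}\big).
\]
Here $Z-1+e^{-Z}\ge 0$ pathwise, since $e^{-z}\ge 1-z$ for all $z\ge 0$. It then remains to prove $\mE\big[(X_1-c)\,\one_{\{M_1>c\}}\big]\ge 0$. For $c\le 0$ one has $M_1\ge X_0=0\ge c$, so the indicator is almost surely one and the expectation equals $\mu-c\ge 0$. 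For $c>0$, let $\tau:=\inf\{s\ge 0:\mu s+\sigma W_s=c\}$, so that $\{M_1>c\}=\{\tau<1\}$ up to a null set; by the strong Markov property the increment $X_1-c=X_1-X_\tau$ has conditional mean $\mu(1-\tau)$ given $\mathcal F_\tau$ on $\{\tau<1\}$, whence
\[
\mE\big[(X_1-c)\,\one_{\{M_1>c\}}\big]=\mu\,\mE\big[(1-\tau)^+\big]\ge 0.
\]
Adding the two nonnegative contributions yields $\frac{\md}{\md b}(V+U)\ge 0$, and hence $L_e$ is decreasing in $b$.

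I expect the genuine obstacle to be precisely the term $\mE[(X_1-c)\,\one_{\{M_1>c\}}]$: in the driftless case it is identically zero by the reflection principle, so its sign hinges on $\mu\ge 0$, the natural assumption for a positively drifting fund (and the one used in all numerical examples). The optional-stopping identity above is the cleanest way to make this dependence visible; alternatively one could insert the explicit joint density of $(X_1,M_1)$, obtained from \eqref{density} via the reflection principle, and verify the inequality by direct integration, but the stopping-time argument avoids that computation entirely. By contrast, the differentiation-under-the-expectation step and the elementary bound $e^{-Z}\ge 1-Z$ are routine once the cancellation at the kink $M_1=c$ is noticed.
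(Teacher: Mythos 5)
Your proof is correct and follows essentially the same route as the paper's: the same pathwise derivative $\mE\big[(e^{X_1-M_1}+M_1-c-1)\one_{\{M_1>c\}}\big]$, the same elementary bound $e^{z}\ge 1+z$ (your $Z-1+e^{-Z}\ge 0$ is just a rephrasing), and the same strong-Markov/hitting-time identity reducing the remainder to $\mu\,\mE[(1-\tau)^+]\ge 0$. If anything you are slightly more careful than the paper, by treating $c=\ln(1+b)\le 0$ separately (where $\{M_1>c\}\ne\{\tau<1\}$) and by making explicit that the sign ultimately rests on $\mu\ge 0$.
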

\begin{proof}
W.l.o.g we assume $\alpha=1$.
Using \eqref{density}, we get the following representation
\begin{align*}
V(b)+U(b)&=e^{\mu+\frac{\sigma^2}2}\int_0^{\ln(1+b)} g(y;\mu+\sigma^2)\md y
\\&{}+(1+b)e^{\mu+\frac{\sigma^2}2}\int_{\ln(1+b)}^\infty e^{-y}g(y;\mu+\sigma^2)\md y
\\&{}+(1+b)\int_{\ln(1+b)}^\infty (y-\ln(1+b))g(y;\mu)\md y\;.
\end{align*}
Deriving with respect to $b$ and using $e^z-1\ge z$ yields
\begin{align*}
V'(b)+U'(b)&=
\mE\Big[e^{\mu+\sigma W_1-M_1}\one_{[M_1>\ln(1+b)]}\Big]
\\&\quad {}+ \mE\Big[\Big(M_1-\ln(1+b)-1\Big)\one_{[M_1>\ln(1+b)]}\Big]
\\&\ge \mE\Big[\Big(\mu+\sigma W_1-\ln(1+b)\Big)\one_{[M_1>\ln(1+b)]}\Big]\;.
\end{align*}
In order to calculate the expectation in the last line above, we use the Markovian property of the Brownian motion $W$. Define additionally $\tau_b:=\inf\{t\ge 0: \mu t+\sigma W_t=\ln(1+b)\}$, then
\begin{align*}
 \mE&\Big[\Big(\mu+\sigma W_1-\ln(1+b)\Big)\one_{[M_1>\ln(1+b)]}\Big]
 \\&= \mE\Big[\Big(\mu \tau+\sigma W_\tau-\ln(1+b)+\mu(1-\tau)+\sigma\big(W_1-W_\tau\big)\Big)\one_{[\tau<1]}\Big]
 \\&= \mE\Big[\Big(\mu(1-\tau)+\sigma  \tilde W_{1-\tau}\Big)\one_{[\tau<1]}\Big]=\mu\mE[(1-\tau)\one_{[\tau<1]}]> 0\;,
\end{align*}
where $\tilde W$ is an independent copy of $W$. 
\end{proof}
This means in particular that $L_e$ attains its minimum at the biggest admissible $b$. Of course, one might minimise the function $L_e$ instead of $L$ and redefine the set of admissible barriers in \eqref{credibility} accordingly. However, we keep the credibility condition \eqref{credibility} and consider the additional (positive) expectation $\alpha U(b)-1$ as a buffer similar to the net profit condition in risk theory, confer for instance \cite[p.\ 130]{Dickson}, where the expected premia should be strictly bigger than the expected loss in order to avoid the almost sure ruin. In our model, this means that we keep $\alpha\mE[D_1(b)]> C_1-C_0$ in order to make the repayment of the debt more probable even if the probability $p$ in \eqref{credibility} is relatively small. \smallskip
\\Since it does not make sense for the PC to prefer the funding scheme to paying the increase in contribution directly if the expected loss is positive, we formalise the following (normalised by dividing through $C_1-C_0$) profitability condition
\begin{equation}
\alpha -\alpha V(b)=\alpha -\alpha \mathbb{E} \left[ e^{\mu+\sigma W_1-\big(\max \limits_{0 \leq s \leq 1}  \{\mu s+\sigma W_{s}\} -\ln(1+b)\big)^+}  \right] < 1 \;. \label{profit}
\end{equation}
The above condition means that the normalised loss of the PC $\alpha -\alpha V(b)$ should be smaller than the amount of $1$ that could have been paid directly into the PAYG. Recall that in order to create a buffer for the state in the sense that the debt will be fully repaid in expectation, we do not take into account the possible gain if the debt account exceeds the $C_1-C_0$.
\\In order to rewrite the above conditions in terms of integrals, we introduce a new measure $Q$ by the following Radon-Nicodym density $\frac{\md Q}{\md \mP}=e^{\sigma W_1-\frac{\sigma^2}2}$ and $\tilde W_s=W_s-\sigma s$ a standard Brownian motion under $Q$. Setting for simplicity $M_1:=\max \limits_{0 \leq s \leq 1}  \{\mu s+\sigma W_{s}\}$, we can rewrite the expected return on investment for the PC as follows using the density introduced in \eqref{density}
\begin{align*}
&V(b)=\mE\Big[e^{\mu + \sigma W_1} e^{-\left(M_1-\ln(1+b)\right)^+}\Big] =e^{\mu+\frac{\sigma^2}2}\mE_Q\Big[e^{-\left(M_1-\ln(1+b)\right)^+}\Big]\nonumber
\\&=e^{\mu+\frac{\sigma^2}2}\Big\{\int_0^{\ln(1+b)}g(y;\mu+\sigma^2)\md y+(1+b)\int_{\ln(1+b)}^\infty e^{-y}g(y;\mu+\sigma^2)\md y\Big\}\;.
\end{align*}
\begin{Bem}
Consider the derivatives of $V(b)$:
\begin{align*}
&V'(b)=e^{\mu+\frac{\sigma^2}2}\int_{\ln(1+b)}^\infty e^{-y}g(y;\mu+\sigma^2)\md y
\\&\quad\quad \,=\mE\Big[e^{\mu+\sigma W_1-\max \limits_{0\le s\le 1}\{\mu s +\sigma W_s\}}\one_{[\max \limits_{0\le s\le 1}\{\mu s +\sigma W_s\}\ge \ln(1+b)]}\Big]>0,
\\&V''(b)=-\frac{e^{\mu+\frac{\sigma^2}2}}{(1+b)^2}\cdot g(\ln(1+b);\mu+\sigma^2)<0\;.
\end{align*}
By maximising just the expected value, we do not take into account the variance and consequently the risk for return on investment to stay considerably below the expected value. One might consider the following target functional instead of $V(b)$.
\[
\tilde V(b):=\mE\Big[e^{\mu + \sigma W_1-\left(M_1-\ln(1+b)\right)^+}\Big]-\lambda \mE\Big[e^{2\mu + 2\sigma W_1-2\left(M_1-\ln(1+b)\right)^+}\Big]\to\max!
\]
with some weight $\lambda>0$. Using change of measure technique mentioned above, the derivative fulfils
\begin{align*}
\tilde V'(b)&= \mE\Big[\Big\{e^{\mu+\sigma W_1-M_1}-2\lambda(1+b)e^{2\mu+2\sigma W_1-2M_1}\Big\}\one_{[M_1>\ln(1+b)]}\Big]
\\&=e^{\mu+\frac{\sigma^2}2}\int_{\ln(1+b)}^\infty e^{-y} g(y;\mu+\sigma^2)\md y
\\&\quad {}-2\lambda(1+b)e^{2\mu+2\sigma^2}\int_{\ln(1+b)}^\infty e^{-2y} g(y;\mu+2\sigma^2)\md y
\end{align*}
Since $\mu+\sigma W_1\le M_1$ a.s. we can conclude that $e^{\mu+\sigma W_1-M_1}\ge e^{2\mu+2\sigma W_1-2M_1}$ a.s., meaning that for $2\lambda (1+b)\le 1$ the first derivative $\tilde V'(b)$ stays positive, meaning that the value $\tilde V(b)$ is increasing. However, the global behaviour of $\tilde V'(b)$ is highly sensitive to the choice of $\lambda$ - a variable which cannot be clearly justified from the economical point of view and creates in this way another source of uncertainty and an opportunity for manipulations. 
\\
Also, one should not forget that the derivative $\tilde V'(b)$ can not be considered globally, but just on the interval allowed by the credibility condition \eqref{credibility} and the boundary $\alpha^*$ defined in Remark \eqref{Bem:1}. In Figure \ref{profit1} we see the functions $V(b)$ (left) and $\tilde V(b)$ (right) for $\mu=0.04$, $\sigma=0.2$ and $\lambda=0.85$. Depending on the $b^*$ the maximum of $\tilde V(b)$ will be attained at different values of $b$. Also, the curve $\tilde V(b)$ changes its form depending on $\lambda$.
\begin{figure}[t]
\includegraphics[scale=0.45, bb = 70 340 200 760]{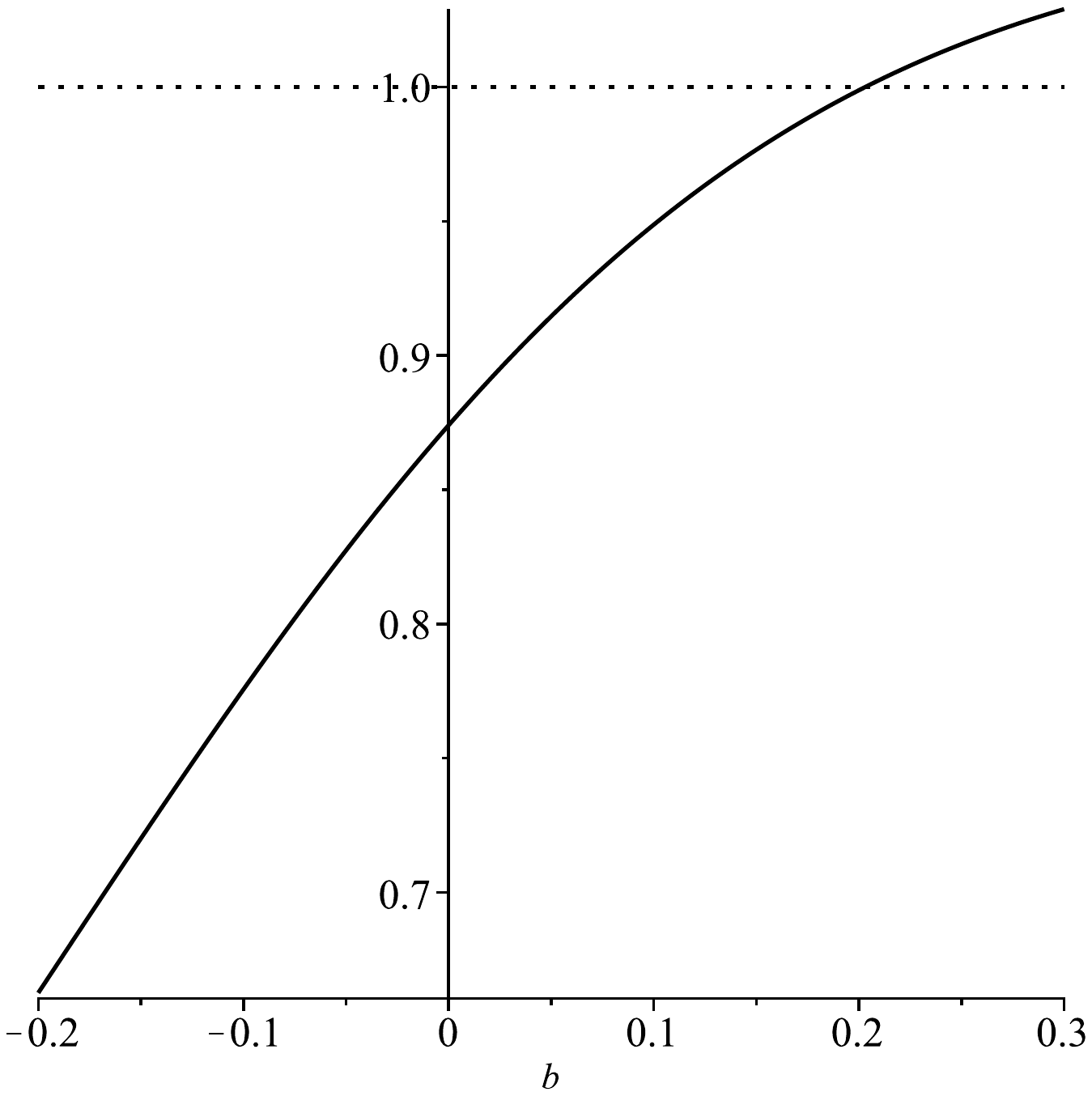}
\includegraphics[scale=0.45, bb = -200 340 200 760]{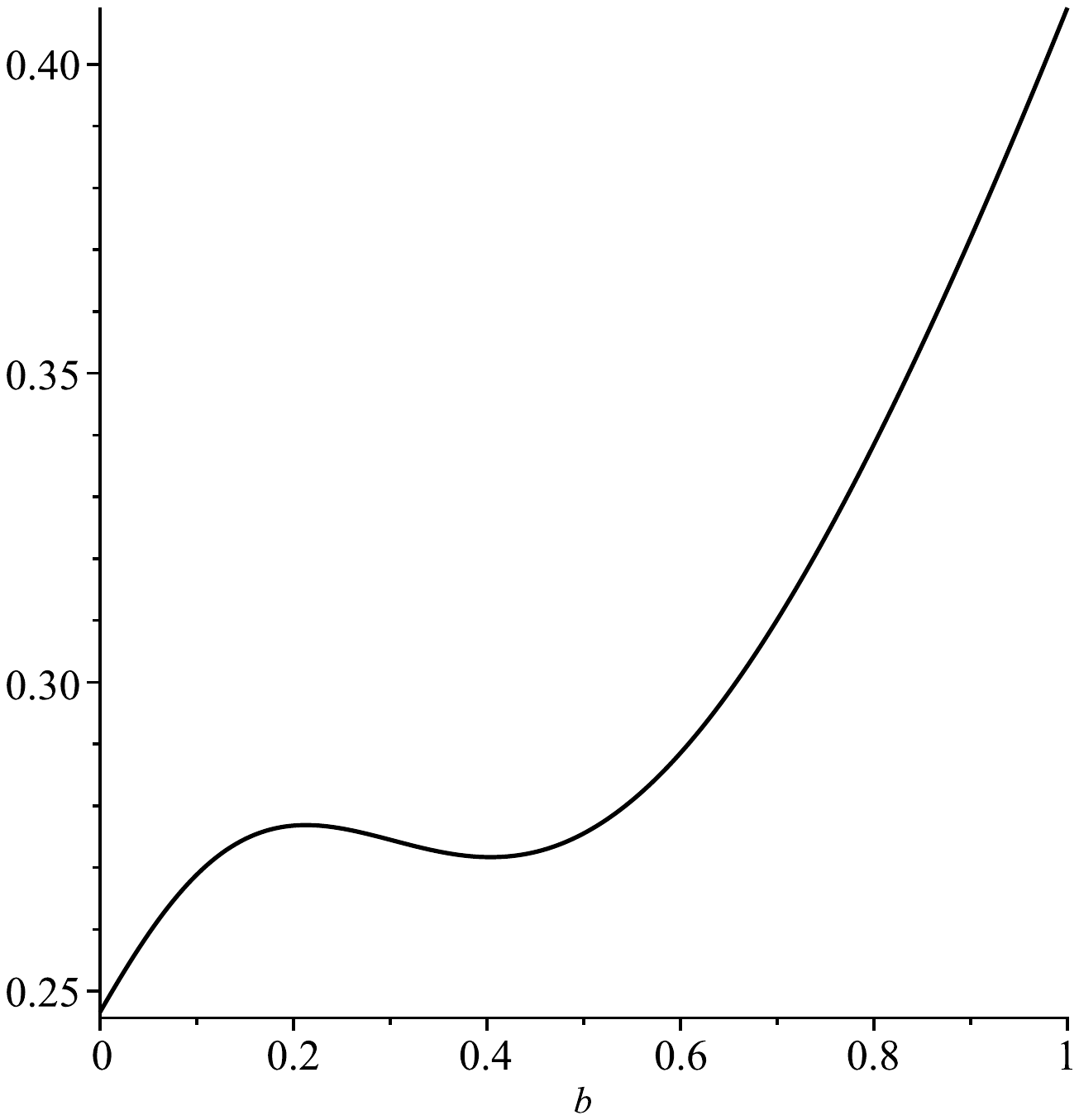}
\caption{Expected \textbf{1-year} return on investment $V(b)$ and modified functional $\tilde V(b)$ with $\lambda=0.85$. \label{profit1}}
\end{figure}
\end{Bem}
In the examples below we demonstrate how the credibility and the profitability conditions work for a fund with realistic parameters.   
\begin{Bsp}\label{Bsp1}
Assume like above $\mu=0.04$, $\sigma=0.2$ and $\alpha^*=10$.
\\In the table below, we compute several values of $\mP\big[D_1(b)\ge \frac 1\alpha\big]$ for different pairs $(b,\alpha)$.
\begin{table}[H]
\centering
		\begin{tabular}{r|ccccc}
		  $\alpha $  & $1$ & $2$& $3$ & $4$ & $5$ \\
 \mbox{$b$}\hspace{0.6cm}   &       &       &    &    &\\\hline
$- 0.2$ &$7.7\cdot 10^{-7}$ 	& 0.06553328 & 0.40025361   & 0.71208305   & 0.91564864\\		
$-0.1$  &$1.32 \cdot 10^{-6}$ & 0.03766247  & 0.23837807   & 0.45738957   & 0.62378926\\
$-0.05$ &$1.48\cdot 10^{-6}$ & 0.02776680  & 0.17865039   & 0.35389069   & 0.49496449\\
$0$     &$1.53\cdot 10^{-6}$ & 0.02014832  & 0.13156264   & 0.26808831   & 0.38351917\\
$0.05$  &$1.49\cdot 10^{-6}$ & 0.01441359  & 0.09534295   & 0.19919441   & 0.29075048 \\
		\end{tabular}	
		\caption{$\mP\big[D_1(b)\ge \frac 1\alpha\big]$ for different values of $\alpha$ and $b$ for 1 year time horizon.}
\end{table}
\bigskip
\noindent
We see that if the state requires a relatively high $p$, the values of $\alpha$ are high and $b$ becomes negative. \smallskip
\\$\bullet$ Choose now $p=0.7$, i.e. the credibility condition is $\mP\big[D_1(b)\ge \frac 1\alpha\big]\ge 0.7$. Then: $\tilde p=0.093078333$, $\alpha^*$ should be bigger than $2.4766867$, the biggest possible $b$ leading to $\alpha=\infty$ equals $0.0975477$ and $b^*=-0.00768$. It holds
\[
L(b^*,\alpha^*)= \alpha^* -\alpha^* V(b^*)-1=0.327634>0\;.
\]
Thus, if the state requires $p=0.7$, the PC should prefer to pay the increase in contribution $C_1-C_0$ into the PAYG her-/himself and not to use the funding possibility.
\medskip
\\$\bullet$ If the state requires $\mP\big[D_1(b)\ge \frac 1\alpha\big]\ge 0.5$ then $\tilde p=0.15750112$. The minimal possible $\alpha^*$ preventing the set of admissible pairs $(b,\alpha)$ to be non-empty equals $e^{-\tilde p+1}=2.3221625$. The biggest possible $b$ leading to $\alpha=\infty$ equals $0.1705821$ and the maximal admissible $b$ corresponding to $\alpha^*=10$ is given by $b^*=0.06574$. The expected loss is then
\[
L(b^*,\alpha^*)=-0.2603<0\;.
\]
\begin{figure}[t]
\includegraphics[scale=0.45, bb = 70 310 200 650]{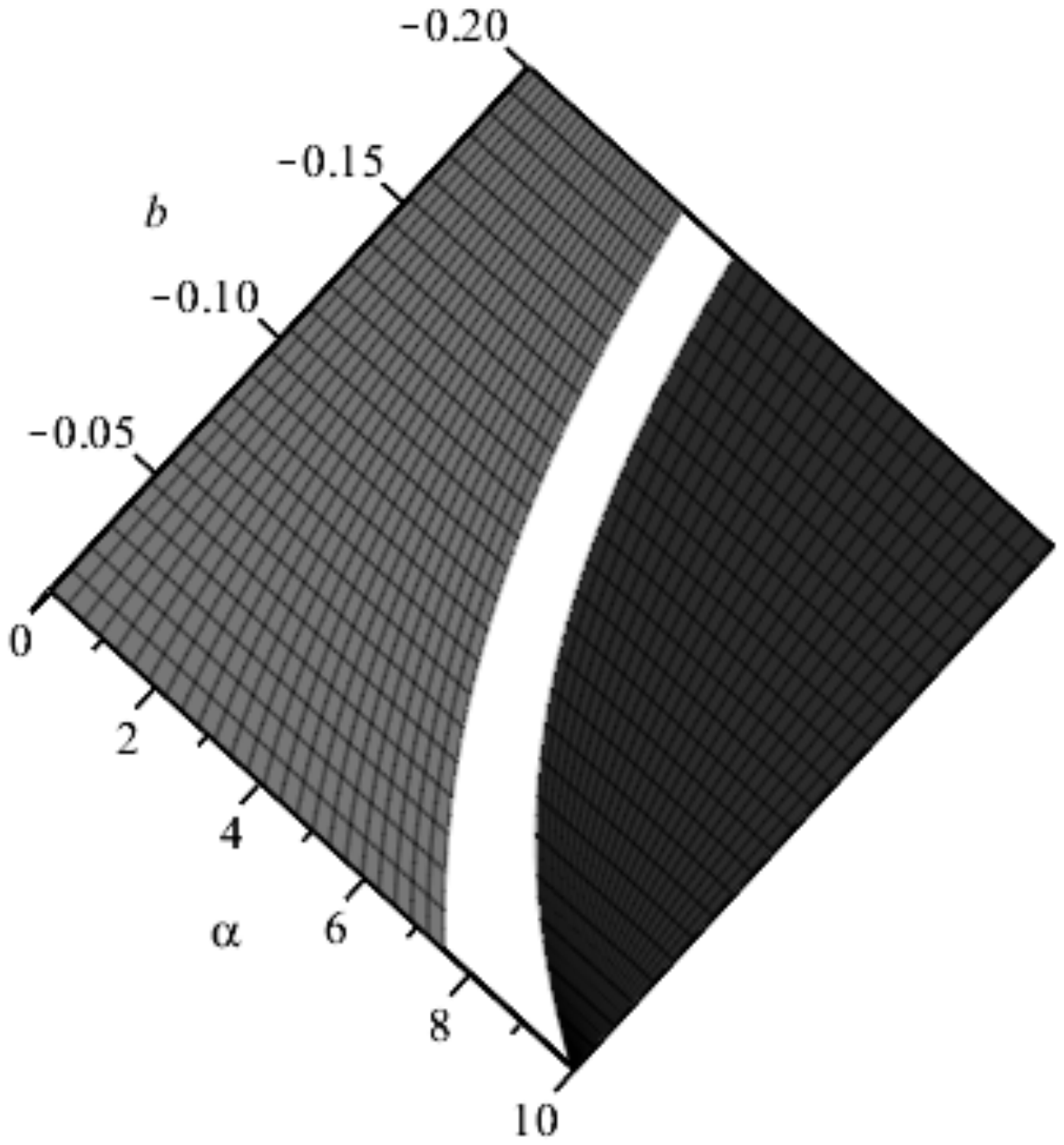}
\includegraphics[scale=0.45, bb = -200 310 200 650]{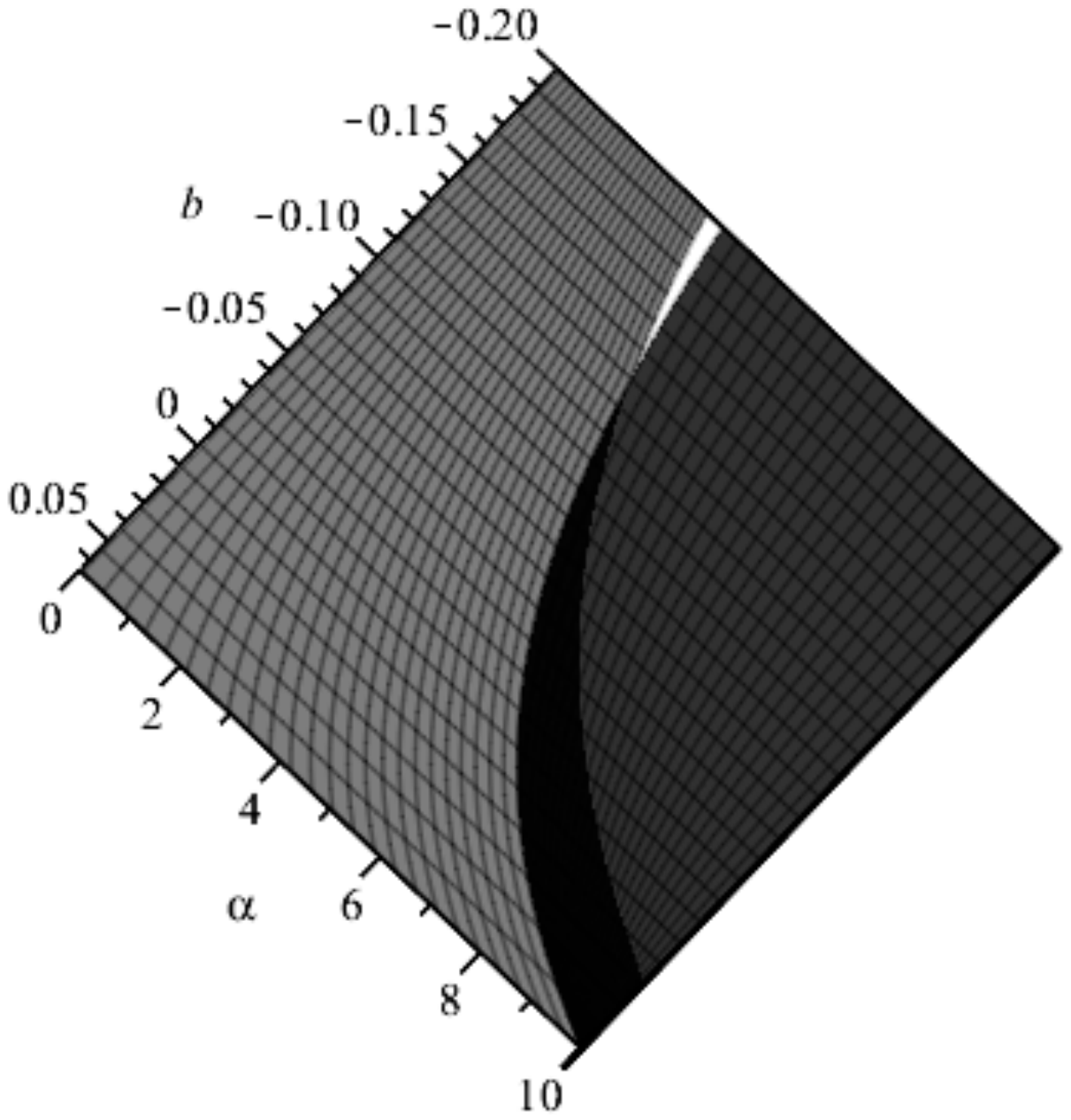}
\caption{Sets of $(b,\alpha)$ fulfilling the credibility \eqref{credibility} (dark grey areas) and profitability \eqref{profit} (light grey areas) conditions for $p=0.7$ (left) and $p=0.5$ (right). \label{intersection}}
\end{figure}
What does this result mean for the PC? The function $V(b)$ (expected return on investment for the PC) is given in Figure \ref{profit1}, left picture. If $b<0.2030$ then the expected return on investment will be smaller than $100\%$, meaning that in expectation the PC will not get her/his full investments back. However, the primary target of adding a funding component is to reduce the increase in contribution $C_1-C_0$ and not to entirely avoid any payments. Below we demonstrate how this might work in practice.
\\In Germany, it is planned to increase the monthly contribution from $18.6\%$ to $19.5\%$ starting from the year 2024. Assume, for an average contributor the monthly increase will approximately amount to $20$ Euro, i.e. $240$ Euro per year. Thus, investing $240\cdot\alpha^*=2,400$ Euro into a fund with parameters given like above, the PC will pay the debt of $240$ Euro with probability $50\%$ and, compared to the direct payment to the PAYG, have a gain of $240\cdot 0.2603\approx 62.5$ Euro. This means, the PC will pay in expectation $240-62.5=177.5$ instead of $240$ Euro per year. The expected loss of the state is then given by $$(C_1-C_0)-(C_1-C_0)\alpha^*\cdot U(b^*)=-75.6,$$ meaning that in expectation the PC gets back an additional amount of $75.6$. Thus, the PC has an expected gain of $62.5+75.6=138.1$ compared to paying the increase of contribution immediately. And the state gets the debt in expectation fully back. 
\bigskip  
\\In Figure \ref{intersection} we plotted the sets of $(b,\alpha)$ fulfilling the credibility \eqref{credibility} and the profitability \eqref{profit} conditions.
\\On the left picture of Figure \ref{intersection} we see the area with $(b,\alpha)$ such that $V(b,\alpha)\ge 0.7$ (dark grey) and the area where $L(b,\alpha)\le 0$ (light grey). It is obvious that these two sets are disjoint, meaning that there is no combination of $(b,\alpha)$ for $\alpha^*=10$ such that the debt is repaid with at least probability $70\%$ and simultaneously the PC's loss is less than $C_1-C_0$. In the right picture in Figure \ref{intersection}, we see that the set of pairs $(b,\alpha)$ fulfilling the credibility condition with $p=0.5$ (dark grey) and simultaneously the profitability condition (light grey) is given by the narrow black area lying between the dark and light grey ones. 
\end{Bsp} 
\subsection{A $10$-years investment period}
It is clear that investing for a longer period brings higher returns in expectation.  Therefore, in this section, the state still pays the increase in contributions, but requires its money back after a period of 10 years. The PC invests some amount of money for 10 years in a pre-specified fund. At the end of the investment period, the PC has to repay the debt and will hopefully get some gain.
\\The invested amount should be a multiple, $\alpha$, from the total anticipated increase of contributions over the next ten years. Considering the numbers given in Example \ref{Bsp1}, the forecast of increase in contribution over the next 10 years amounts to $2,400$ Euro. 
Like in the previous section, we require the credibility condition \eqref{credibility} and the profitability condition \eqref{profit} while the time interval changed from $1$ to $10$.
\begin{Bsp}
\begin{figure}[t]
\includegraphics[scale=0.45, bb = -150 350 200 650]{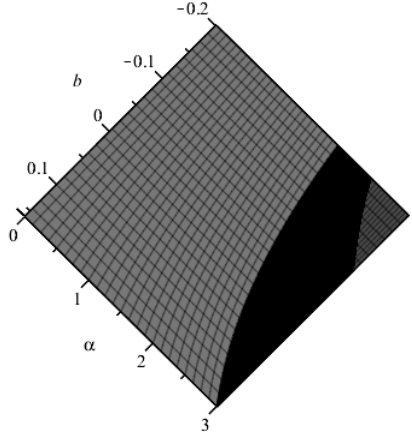}
\caption{Sets of $(b,\alpha)$ fulfilling the credibility \eqref{credibility} (dark grey area) and profitability \eqref{profit} (light grey area) conditions for \textbf{10 years} and $p=0.7$.\label{bild3}}
\end{figure}
Assume again $\mu=0.04$ and $\sigma=0.2$.  
In the Table \ref{tab10} below we calculate $\mP[D_{10}(b)\ge \frac1\alpha]$ for different pars of $(b,\alpha)$. The difference to a 1 year investment is considerable. For instance, in order to keep the probability of repayment above $70\%$ it suffices to set $\alpha=3$ and $b=0.05$. Therefore, we set $\alpha^*=3$, i.e. the highest possible $\alpha$ the state is willing to adopt is equal to $3$.                        
\begin{table}[H]
	\centering
		\begin{tabular}{r|ccccc}
		 $\alpha $  & $1$ & $2$& $3$ & $4$ & $5$ \\
			 \mbox{$b$}\hspace{0.6cm}   &       &       &    &    &\\\hline
$-0.2$ & 0.2546296 & 0.7276311 & 0.8842143   & 0.9508487   & 0.9860533     \\
$-0.1$ & 0.2671423 & 0.6884180 & 0.8334411   & 0.8984640   & 0.9341821\\
$-0.05$& 0.2698034 & 0.6680602 & 0.8076430   & 0.8715453   & 0.9071802\\
$0$    & 0.2706137 & 0.6474819 & 0.7817909   & 0.8444029   & 0.8797703 \\
$0.05$ & 0.2699046 & 0.6268430 & 0.7560225   & 0.8172035   & 0.8521476 \\
		\end{tabular}
		\caption{$\mP\Big[D_{10}(b)\ge \frac 1\alpha\Big]$ for different values of $\alpha$ and $b$ for 10 years time horizon.\label{tab10}}                       
\end{table}
\noindent
Figure \ref{bild3} showcases the pairs $(b,\alpha)$ allowed by the credibility condition \eqref{credibility} (dark grey area), the pairs $(b,\alpha)$ where the profitability condition is fulfilled (light grey area) and the intersection area (black). It is clear that the 10-year investment should be preferred if the financial situation allows.  
\end{Bsp}
\section{Comparison of the Two Investment Types}
This section compares the two strategies introduced above in Sections \ref{sec:ls} and \ref{sec:opti_approach}: continuous withdrawal and a lump sum repayment at the end of the period. 
\\For that purpose, we look at the loss functions corresponding to each strategy in dependence on time. Let again $\alpha^*$ be the liquidity restriction. Define 
\begin{align*}
&L_d(t):=\alpha^* -\alpha^*\mE[e^{\mu t+\sigma W_t}]+1,
\\&L_c(t,b):= \alpha^* -\alpha^*\mE\big[e^{\mu t+\sigma W_t-\big(\max\limits_{0\le s\le t}\{\mu s+\sigma W_s\}-\ln(1+b)\big)^+}\big]-1\;.
\end{align*}
i.e. $L_d$ is the normalised loss of the PC in the model with the lump sum debt repayment and $L_c$ is the normalised loss in the continuous withdrawal model. Both functions depend now on the time interval under consideration. Let further
\begin{align*}
\Lambda(t,b)&:=L_c(t,b)-L_d(t)
\\&=\alpha^*\mE[e^{\mu t+\sigma W_t}]-\alpha^*\mE\big[e^{\mu t+\sigma W_t-\big(\max\limits_{0\le s\le t}\{\mu s+\sigma W_s\}-\ln(1+b)\big)^+}\big]-2\;.
\end{align*}
It is easy to see that $\Lambda$ is strictly increasing in $t$ and strictly decreasing in $b$. 
Because 
\[
\Lambda(0,b)=\begin{cases}
-2&\mbox{: $b\ge 0$}\\
-\alpha^*b-2&\mbox{: $b< 0$},
\end{cases}
\]
and $\lim\limits_{t\to\infty} \Lambda(t,b)=\infty$ we conclude that for every $b$ with $b\ge-\frac 2{\alpha^*}$ there is an $t$ such that $L(t,b)=0$. For $b< -\frac 2{\alpha^*}$ the function $\Lambda$ stays positive meaning that the continuous withdrawal strategy is definitely not optimal.
\\Thus, for $b\ge-\frac 2{\alpha^*}$ we can conclude that by implicit function theorem there is a curve $\beta:[0,\infty)\to[-\frac 2{\alpha^*}\vee -1,1]$, $\beta'>0$ such that $\Lambda(t,\beta(t))\equiv 0$, $\Lambda(t,b)>0$ for $b<\beta(t)$ and $\Lambda(t,b)<0$ for $b>\beta(t)$. 
\\The choice of the investment strategy will depend on the parameters of the underlying fund, the liquidity restriction $\alpha^*$, the investment time horizon $t$ and credibility condition \eqref{credibility}:
\begin{itemize}
\item If $\Lambda(t,b^*)>0$, choose the lump sum repayment strategy;
\item If $\Lambda(t,b^*)\le 0$ choose the continuous withdrawal strategy;
\item If $L_d(t)$,$L_c(t,b^*)\ge 0$ - pay directly into the PAYG system.
\end{itemize}  
Assuming again $\mu=0.04$, $\sigma=0.2$, $\alpha^*=10$, $t=1$ and $p=0.5$ we get $b^*=0.06574$, confer Example \ref{Bsp1}. It holds that
\begin{align*}
& L_d(1)=10-10e^{\mu+\frac{\sigma^2}2}+1=0.3816>0,
\\& L_c(1,b^*)=-0.2603<0\;.
\end{align*}
Therefore, for one year time horizon and $\alpha^*=10$ one should prefer the continuous withdrawal. 
\\For the time horizon of 10 years one gets for the same parameters $b^*=0.8707$
\begin{align*}
&L_d(10)=10-10e^{(\mu+\frac{\sigma^2}2)10}+1=-7.2211<0,
\\&L_c(10,b^*)=-3.5291<0,
\\& \Lambda(10,b^*)=3.6921>0,
\end{align*}
meaning that one should definitely prefer the lump sum repayment strategy. This can be explained by the fact that by withdrawing money from the investment we miss possible gains. Even taking into account the amount on the debt account exceeding the actual debt will not make the continuous withdrawal more attractive: 
\[
L_c(10,b^*)-\alpha^*\mE[D_{10}(b^*)]+1=-7.088>-7.2211\;.
\]
In the table below we sum up the optimal choice of a strategy for different values of $\alpha$ and time horizons $t$ in years. Let C denote the continuous withdrawal and LS the lump sum debt repayment strategy. If it is optimal not to use any of the funding strategies but just to pay the increase in contribution into the PAYG, we write PAYG.
\\
\begin{table}[h]
\centering
		\begin{tabular}{r|cccccccccc}
$\alpha$ & $1$ & $2$ & $3$  & $4$  & $5$  &  $6$ & $7$ &$8$    & $9$  &$10$ \\
 \mbox{$t$}\hspace{0.6cm} & & & & & & & &\\\hline
$1$ &  PAYG & PAYG & PAYG & C & C & C & C & C & C & C\\  
$2$ &  PAYG & PAYG & C  & C & C & C & C & C & C & C\\		
$4$ &  PAYG & C  & C  & C & C & C & C & LS & LS & LS\\
$6$ &  PAYG & C  & C  & C & LS & LS & LS & LS & LS & LS\\
$8$ &  PAYG & C  & C  & LS & LS & LS & LS & LS & LS & LS \\
$10$&  PAYG & C  & LS & LS & LS & LS & LS & LS & LS & LS \\
$20$&  LS   & LS & LS & LS & LS & LS & LS & LS & LS & LS
\\
$40$&  LS   & LS & LS & LS & LS & LS & LS & LS & LS & LS
\\

		\end{tabular}	
		\caption{The optimal strategy for different values of $\alpha$ and time horizons $t$.}
\end{table}
\\
The probability of default for the state, i.e. the probability that the state will not get the full debt back, equals $50\%$ by definition of the optimal $b^*$. For the discrete repayment case we sum up the results in Table \eqref{default} below for the parameters given above. For each pair $(t,\alpha)$ we calculate the probability of default $\mP[\alpha e^{\mu t+\sigma W_t}<1+\alpha]$, meaning that the state will get less than $C_1-C_0$ after the PC gets his invested money back.  
\begin{table}[H]
\centering
		\begin{tabular}{r|cccccccccc}
$\alpha$ &$1$ & $2$ & $3$  & $4$  & $5$  &  $6$ & $7$ &$8$    & $9$  &$10$ \\
 \mbox{$t$}\hspace{0.6cm} & & & & & & & &\\\hline
$1$ &0.99 &0.96 &0.89 & 0.82 & 0.76 & 0.72 & 0.68 & 0.65 & 0.63 & 0.61 \\

$2$ &0.98 &0.87 &0.77 & 0.69 & 0.64 & 0.60 & 0.58 & 0.55 & 0.54 & 0.52\\		
$4$ &0.91 &0.73 &0.63 & 0.56 & 0.52 & 0.49 & 0.47 & 0.46 & 0.45 & 0.44\\
$6$ &0.82 &0.63 &0.54 & 0.49 & 0.45 & 0.43 & 0.41 & 0.40 & 0.39 & 0.38\\
$8$ &0.75 &0.56 &0.48 & 0.43 & 0.40 & 0.38 & 0.37 & 0.36 & 0.35 & 0.34 \\
$10$&0.68 &0.50 &0.43 & 0.39 & 0.37 & 0.35 & 0.34 & 0.33 & 0.32 & 0.31 \\
$20$&0.45 &0.33 &0.28 & 0.26 & 0.25 & 0.24 & 0.23 & 0.22 & 0.21 & 0.21\\
$40$&0.24 &0.17 &0.15 & 0.14 & 0.13 & 0.13 & 0.12 & 0.12 & 0.12 & 0.12 \\
		\end{tabular}	
		\caption{$\mP[\alpha e^{\mu t+\sigma W_t}<1+\alpha]$ for different values of $\alpha$ and time horizons $t$.\label{default}}
\end{table}
\noindent Again, in Table \ref{default} we see that for short-term periods it is more profitable to use the continuous withdrawal strategy or PAYG.
\section{Conclusions}
In the last decade, most OECD countries have enacted pension reforms of their traditional defined-benefit pay-as-you-go (PAYG) schemes. PAYG requires the balance between income from contributions and pension expenditure where the current contributions finance the current pensions. The most common reforms have been the changes in the level of benefits (sometimes linked to a longevity index, such as the life expectancy) and increases in the retirement age. \\
At the same time, there are some countries that combine the PAYG scheme and a defined contribution funding part within the mandatory pension system. These systems have been advocated, particularly by the World Bank, as a practical way to the higher financial market returns with the cost of a scheme with a greater funded component. In this line, the present paper proposes an alternative with the contributors investing an extra amount of money into a fund, so that part of the investment together with the returns can restore the financial sustainability of the PAYG scheme. The contributor faces the investment risk while the state bears the risk that the returns on investment do not cover the deficit in contributions for the period analysed \color{black} However, in the case of extra returns after debt repayment, the contributor keeps the gains. On the other hand, the loss of the state, in the worst case, will amount to the increase in contribution needed to restore the financial sustainability of the system while the contributor would normally invest and risk a much bigger multiple of this amount.\\
Two different debt repayment types depending on the amount invested and the timing of the repayment to the state are analysed for a prototypical contributor. The first set of strategies features the repayment of the debt at the end of a pre-specified period as a lump sum. In several examples, we calculate the probability of full payback of the debt, expected loss of the state and the expected value of the gains for the individual. In particular, we compare, for different invested amounts, the case of the repayment after a year versus a repayment after 10 years. As expected, the loss probability decreases when the amount invested and the investment horizon increase. \\
As an optimisation approach, we study the case when the state requires the individual to transfer any excess above a particular level (barrier) of return – to be optimised -- continuously in time as a debt repayment (second type of strategies). Comparing different barriers, we show that the optimal strategy is the biggest possible barrier such that the debt to the state is repaid with a certain pre-specified probability. In an example, we compare the continuous withdrawal and the lump sum repayment strategies with the possibility to pay the increase in contribution directly into the PAYG system. If the period for the repayment is long enough, the optimal strategy tends to be a lump sum debt repayment. Directly paying into the PAYG is the optimal strategy if the investment period is short and the amount invested is relatively small.
 \\
The model presented in this paper could be implemented as an alternative to both the PAYG and mixed pension systems as we are not advocating a particular strategy but rather offer possibilities allowing to reflect the actual market and societal situation. Additionally, our model could raise the public awareness of the financial sustainability of the PAYG as the expected increases in contributions affect every individuals' finances. According to Fornero \cite{Fornero}, if the participants of a pension scheme do not understand its basic principles then the reforms are repealed. Knowledge on pension basic principles is thus important not only for individual's well-being (planning) but also for the society as a whole.\medskip\\
Finally, based on the methodology presented in this paper, at least three important directions for future research can be identified.
First, it would be interesting to explore the smoothed and affordable contributions to be invested by the individual to make the model more applicable in the real world. 
Another direction would be to study the risk sharing between the government and the individuals under different scenarios by using for instance Value at Risk and Expected Shortfall risk measures. Finally, one can analyse a possible structure and asset allocation problems of the fund that might be used in the real life. 
\subsection*{Acknowledgements}
Mar\'ia del Carmen Boado-Penas is grateful for the financial assistance received from the Spanish Ministry of the Economy and Competitiveness [project ECO2015-65826-P]. \smallskip
\\The research of Julia Eisenberg was funded by the Austrian Science Fund (FWF), Project number V 603-N35.
\bibliographystyle{plain}
\bibliography{PAYG}{}

\begin{thebibliography}{10}

\bibitem{Alonso}
J.~Alonso-Garc\'ia and P.~Devolder.
\newblock Optimal mix between pay-as-you-go and funding for {DC} pension
  schemes in an overlapping generations model.
\newblock {\em Insurance: Mathematics and Economics}, 70:224---236, 2016.

\bibitem{Auerbach}
A.J. Auerbach and L.J. Kotlikoff.
\newblock {\em Dynamic Fiscal Policy}.
\newblock Cambridge University Press, Cambridge, 1987.

\bibitem{BS}
A.N. Borodin and P.~Salminen.
\newblock {\em Handbook of {Brownian} motion -- facts and formulae}.
\newblock Birkh\"auser Verlag, Basel, 1998.

\bibitem{Buchanan}
J.M. Buchanan.
\newblock Social insurance in a growing economy: a proposal for radical reform.
\newblock {\em National Tax Journal}, 21:386--395, 1968.

\bibitem{GreenPaper}
European Commission.
\newblock {Green Paper} on adequate, sustainable and safe {European} pension
  systems.
\newblock 2010.

\bibitem{WhitePaper}
European Commission.
\newblock {White Paper}: {An} agenda for adequate, safe and sustainable
  pensions.
\newblock 2012.

\bibitem{Menil}
G.~De~Menil, F.~Murtin, and E.~Sheshinski.
\newblock Planning for the optimal mix of paygo tax and funded savings.
\newblock {\em Journal of Pension Economics and Finance}, 5:1--25, 2006.

\bibitem{Devolder}
P.~Devolder and R.~Melis.
\newblock Optimal mix between pay as you go and funding for pension liabilities
  in a stochastic framework.
\newblock {\em ASTIN Bulletin}, 45:551--575, 2015.

\bibitem{Dickson}
D.C.M. Dickson.
\newblock {\em Insurance: risk and ruin}.
\newblock Cambridge University Press, New York, 2005.

\bibitem{Dutta}
J.~Dutta, S.~Kapur, and J.~Orszag.
\newblock A portfolio approach to the optimal funding of pensions.
\newblock {\em Economics Letters}, 69:201--206, 2000.

\bibitem{Ageingreport}
Economic and Financial Affairs.
\newblock The ageing report: {Economic} and budgetary projections for the 28
  {EU} member states (2016 - 2070).
\newblock {\em European Commission}, 2018.

\bibitem{Fajnzylber}
E.~Fajnzylber and D.A. Robalino.
\newblock Assessing fiscal costs and the distribution of pensions in
  transitions to {FDC} and {NDC} systems: {A} retrospective analysis for
  {Chile}.
\newblock {\em Nonfinancial Defined Contribution Pension Schemes in a Changing
  Pension World}, Washington, DC: World Bank:281--319, 2012.

\bibitem{Fornero}
E.~Fornero.
\newblock Economic-financial literacy and (sustainable) pension reforms: why
  the former is a key ingredient for the latter.
\newblock {\em Bankers, Markets \& Investors}, 134:6--16, 2015.

\bibitem{Guigou}
J.D. Guigou, B.~Lovat, and J.~Schiltz.
\newblock Optimal mix of funded and unfunded pension systems: {The} case of
  {Luxembourg}.
\newblock {\em Pensions: An International Journal}, 17:208--222, 2012.

\bibitem{Karatzas}
I.~Karatzas and S.~Shreve.
\newblock {\em Brownian Motion and Stochastic Calculus}.
\newblock Springer-Verlag, New York, 1998.

\bibitem{Knell}
M.~Knell.
\newblock The optimal mix between funded and unfunded pension systems when
  people care about relative consumption.
\newblock {\em Economica}, 77:710--733, 2010.

\bibitem{Matsen}
E.~Matsen and O.~Thogersen.
\newblock Designing social security - {A} portfolio choice approach.
\newblock {\em European Economic Review}, 48:883--904, 2004.

\bibitem{Merton}
R.C. Merton.
\newblock {\em On the role of social security as a means for efficient risk
  sharing in an economy where human capital is not tradable {\rm pp. 325--358
  in {``Financial Aspects of the U.S. Pension System'',} National Bureau of
  Economic Research, Inc.}}
\newblock University of Chicago Press, Chicago, 1983.

\bibitem{oecd2011}
OECD.
\newblock Pensions at a glance 2011, {OECD} and {G20} indicators.
\newblock {\em OECD Publishing, Paris}, 2011.

\bibitem{oecd2012}
OECD.
\newblock Pensions {Outlook} 2012.
\newblock {\em OECD Publishing, Paris}, 2012.

\bibitem{oecd2013}
OECD.
\newblock Pensions at a glance 2013, {OECD and G20} indicators.
\newblock {\em OECD Publishing, Paris}, 2013.

\bibitem{oecd2017}
OECD.
\newblock Pensions at a glance 2017, {OECD and G20} indicators.
\newblock {\em OECD Publishing, Paris}, 2017.

\bibitem{Palacios}
R.~Palacios and Y.~Sin.
\newblock Pension policy options in eritrea.
\newblock {\em Report No.28541--ER, {Poverty Reduction and Economic Management
  2, Country Department 5, Africa Region}}, 2002.

\bibitem{Persson}
M.~Persson.
\newblock {\em Five fallacies in the social security debate, {\rm pp. 39--51 in
  {``Social Security Reform in Advanced Countries'' edited by T. Ihori and T.
  Tachibanaki}}}.
\newblock Taylor \& Francis, London, 2002.

\bibitem{Robalino}
D.A. Robalino and A.~Bodor.
\newblock On the financial sustainability of earnings-related pension schemes
  with ``pay-as-you-go'' financing and the role of government indexed bonds.
\newblock {\em Journal of Pension Economics and Finance}, 8:153--187, 2009.

\bibitem{Rofman}
R.~Rofman, I.~Apella, and E.~Vezza.
\newblock Beyond contributory pensions: Fourteen experiences with coverage
  expansion in latin america.
\newblock {\em Directions in development; human development}, Washington, DC:
  World Bank, 2015.

\bibitem{Sinn}
H.W. Sinn.
\newblock Why a funded pension is useful and why it is not.
\newblock {\em International Tax and Public Finance}, 7:389--410, 2000.

\bibitem{Valdes}
S.~Vald\'{e}s-Prieto.
\newblock Securitization of taxes implicit in {PAYG} pensions.
\newblock {\em Economic Policy}, 20:216--265, 2005.

\bibitem{Whitehouseb}
E.R. Whitehouse.
\newblock Pensions during the crisis: impact on retirement income systems and
  policy responses.
\newblock {\em Geneva Papers on Risk and Insurance - Issues and Practice},
  34(4):536--547, 2009.

\bibitem{Whitehousea}
E.R. Whitehouse.
\newblock Pensions, purchasing-power risk, inflation and indexation.
\newblock {\em OECD Social, Employment and Migration Working Papers, No. 77},
  2009.

\end{thebibliography}
\end{document}